\def\BibTeX{{\rm B\kern-.05em{\sc i\kern-.025em b}\kern-.08em
    T\kern-.1667em\lower.7ex\hbox{E}\kern-.125emX}}
\newtheorem{definition}{Definition}
\newtheorem{theorem}{Theorem}
\newtheorem{lemma}{Lemma}
\def\eg {\emph{e.g}.} 
\def\ie {\emph{i.e}.}
\def\etal{\emph{et al}.}
\begin{document}
%
\title{High Dimensional Similarity Search with Satellite System Graph: Efficiency, Scalability, and Unindexed Query Compatibility}
%
%
%
%

\author{Cong Fu, Changxu Wang, Deng Cai
\IEEEcompsocitemizethanks{\IEEEcompsocthanksitem
Cong Fu, Changxu Wang, and Deng Cai are with the State Key Laboratory of Computer-Aided Design (CAD) and Computer Graphics (CG), Zhejiang University, Hangzhou 310027, China. Cong Fu and Changxu Wang are also with the Alibaba Group, Beijing 100102, China.
(e-mail: \{fucong.fc, changxu.wcx\}@alibaba-inc.com; dengcai@gmail.com).
}}

%
%

\markboth{Journal of IEEE Transactions on Pattern Analysis and Machine Intelligence, ~Vol.~?, No.~?, January~2020}%
{Shell \MakeLowercase{\textit{et al.}}: Bare Demo of IEEEtran.cls for Computer Society Journals}
%



\IEEEtitleabstractindextext{%
\begin{abstract}
Approximate Nearest Neighbor Search (ANNS) in high dimensional space is essential in database and information retrieval. Recently, there has been a surge of interest in exploring efficient graph-based indices for the ANNS problem. Among them, Navigating Spreading-out Graph (NSG) provides fine theoretical analysis and achieves state-of-the-art performance. However, we find there are several limitations with NSG: 1) NSG has no theoretical guarantee on nearest neighbor search when the query is not indexed in the database; 2) NSG is too sparse which harms the search performance. In addition, NSG suffers from high indexing complexity. To address above problems, we propose the Satellite System Graphs (SSG) and a practical variant NSSG. Specifically, we propose a novel pruning strategy to produce SSGs from the complete graph. SSGs define a new family of MSNETs in which the out-edges of each node are distributed evenly in all directions. Each node in the graph builds effective connections to its neighborhood omnidirectionally, whereupon we derive SSG's excellent theoretical properties for both indexed and unindexed queries. We can adaptively adjust the sparsity of an SSG with a hyper-parameter to optimize the search performance. Further, NSSG is proposed to reduce the indexing complexity of the SSG for large-scale applications. Both theoretical and extensive experimental analysis are provided to demonstrate the strengths of the proposed approach over the existing representative algorithms. Our code has been released at \url{https://github.com/ZJULearning/SSG}.
\end{abstract}

\begin{IEEEkeywords}
Nearest neighbors, Similarity search, High dimension, Large-scale database
\end{IEEEkeywords}}

\maketitle

\IEEEdisplaynontitleabstractindextext

%
\IEEEpeerreviewmaketitle

\IEEEraisesectionheading{\section{Introduction}\label{introduction}}
\IEEEPARstart{A}{pproximate} Nearest Neighbor Search (ANNS) has been a fundamental problem over decades and supports many applications in database, information retrieval, data mining, and machine learning \cite{BeisL97Shape,ferhatosmanoglu2001approximate, chen2005robust, philbin2007object, LiuRR07Clustering, zheng2016lazylsh, AroraSK018}. When machine learning, especially deep learning techniques are applied to more and more traditional large-scale applications, indexing and searching on dense-real-vector databases becomes a significant challenge. Due to the intrinsic difficulty of the exact nearest neighbor search, various solutions have been proposed to solve the Approximate Nearest Neighbor Search (ANNS) problem. For example, the tree-based methods \cite{Bentley1975Multidimensional, Fukunaga1975A, Silpaanan2008Optimised, Jagadish2005iDistance, Fu2000Dynamic}, the hashing-based methods \cite{Gionis1999Similarity, Weiss2008Spectral, Huang2015Query, Liu2016Query},  the quantization-based methods \cite{weber1998quantitative, jegou2011product, ge2014optimized, zhang2014composite, johnson2017billion} and the graph-based methods \cite{arya1993approximate, Hajebi2011Fast, malkov2014approximate, MalkovYHNSW16, Ben2016Fanng, fu2019fast}. Among them, the graph-based methods have shown promising search performance on widely used public datasets \cite{Hajebi2011Fast, MalkovYHNSW16, Ben2016Fanng, fu2019fast}  and been in the leading position (see the well-known benchmark \cite{aumuller2017ann}).

Graph-based methods build (proximity) graphs on the dataset as their indices for similarity retrieval. Their indices are typically a set of nodes and edges implemented by 2d-array or adjacent lists in procedures. The search algorithms used in graph-based methods are usually an A*-search like algorithm (given in Alg. \ref{search_alg}) or its variants. From a random or pre-selected fixed starting node, they hope to check the neighbors and neighbors' neighbors iteratively to locate a closer position to the query in the graph and move towards the query. Different from tree, hashing, and quantization, which try to solve ANNS mainly by partitioning the space, graph-based methods mainly benefit from the idea of ``connecting'' due to the nature of the routing-based search. 

As a consequence, the way to connect the nodes will influence the search performance significantly (see the recent survey \cite{shimomura2020survey} for more details). Also, the complexity of this algorithm is difficult to analyze because it is basically a greedy method. However, there are works like Monotonic Search Networks (MSNET)\cite{dearholt1988monotonic}, Randomized Neighborhood Graph (RNG*)\cite{arya1993approximate}, and Monotonic Relative Neighborhood Graph (MRNG)\cite{fu2019fast}, which make it analyzable based on their carefully designed graph structures. This is because their graphs enable a property called search-monotonicity. Specifically, for any query node $q$ and any search-starting node $s$, it is guaranteed to find a path from $s$ to $q$, $\{s(n_0),n_1,n_2,...,n_L,q\}$, on a search-monotonic graph with Alg. \ref{search_alg}. More importantly, $\forall 0\le l< L, \delta(n_l,q) > \delta(n_{l+1},q)$, where $\delta(n_l,q)$ is the distance between $n_l$ and $q$. In other words, a closer node to the query will always be found at next iteration. Based on the monotonicity of search, these graphs ensure a low theoretical search time complexity. In the resent work \cite{fu2019fast}, Fu \etal~ extend the monotonic-search-graph theory based on MSNET\cite{dearholt1988monotonic} and RNG*\cite{arya1993approximate}. Using similar pruning strategy with HNSW\cite{MalkovYHNSW16} and FANNG\cite{Ben2016Fanng}, they propose a practical algorithm NSG for large-scale scenarios with lower space and indexing-time consumption and better search performance. They demonstrate their strengths over many representative methods on public and large-scale datasets. Despite the success of NSG, there are three limitations in it:


\smallskip
\noindent
i. \textbf{NSG indexing algorithm produces an over-sparse graph, which is the bottleneck of their search performance.} NSG can be regarded as a graph pruned from a $k$ nearest neighbor graph (KNNG). Fu \etal~\cite{fu2019fast} try to sparsify the graph to accelerate the retrieval, but we argue that sparser graphs are not necessarily better.


\begin{algorithm}[t]\small
	\caption{Search-on-Graph($G$, $\textbf{p}$, $\textbf{q}$, $l$)}
	\label{search_alg}
	\begin{algorithmic}[1]
		\Require graph $G$, start node $\textbf{p}$, query point $\textbf{q}$, candidate pool size $l$
		\Ensure $k$ nearest neighbors of $\textbf{q}$
		\State $i$=0, candidate pool $S = \emptyset$
		\State $S$.add($\textbf{p}$)
		\While{$i < l$}
		\State $i=$the id of the first unchecked node $p_i$ in $S$
		\State mark $\textbf{p}_\textbf{i}$ as checked
		\ForAll {neighbor $\textbf{n}$ of $\textbf{p}_\textbf{i}$ in $G$}
		\If{$\textbf{n}$ has not been visited}
		\State $S$.add($\textbf{n}$)
		\EndIf
		\EndFor
		\State sort $S$ in ascending order of the distance to $\textbf{q}$
		\If{$S$.size() $>l$} 
		\State $S$.resize($l$) // remove nodes from back of $S$ to keep its\State size no larger than $l$
		\EndIf
		\EndWhile
		\State return the first $k$ nodes in $S$
	\end{algorithmic}
\end{algorithm}


\smallskip
\noindent
ii. \textbf{The theoretical properties of NSG is derived from an unstated assumption that the query is indexed in the graph.} Search for indexed and unindexed queries are seldom distinguished in most prior works. There may be few differences in tree, hashing, and quantization based methods (therefore also neglected in previous graph-based methods), but we will show that search on a graph behaves quite differently when the query is not indexed.



\smallskip
\noindent
iii. \textbf{The high time complexity of NSG edge selection limits its scalability,} which is the key part of their indexing.


To address above problems, we propose the Satellite System Graph. From our perspective, the search process on a graph index is very similar to the message transferring in a Communication Satellite System. In such a system, the neighbors of each satellite are distributed uniformly around. Consequently, the information propagation is very efficient in any direction, no matter where the target receivers are. Imitating satellites, SSG is a carefully designed graph structure, where the out-edges around each node are distributed evenly in all directions. Meanwhile, we prove SSGs belong to the MSNET family. Therefore, SSG not only inherits the excellent ANNS properties for indexed queries, but also has a good theoretical guarantee for unindexed queries. Further, the contributions of this paper is highlighted as follows:

\smallskip
\noindent
\textbf{1)} We propose a novel pruning strategy to produce a sparse graph SSG from an approximate KNNG for efficient similarity retrieval.

\smallskip
\noindent
\textbf{2)} We reveal the excellent theoretical properties of SSG on both indexed and unindexed queries. The sparsity of SSG can be adjusted adaptively with a hyper-parameter for performance optimization. 

\smallskip
\noindent
\textbf{3)} To reduce SSG's indexing complexity, we propose a practical variant NSSG based on SSG for large-scale search. Extensive experiments demonstrate the strengths of NSSG.

\section{Background}
\label{related_work}
\subsection{Notations}
Let $S$ denote a finite dataset with $n$ points. Let $E$ denote the Euclidean space with dimension $d$, commonly used in this literature. Let $\delta(\cdot,\cdot)$ denote the Euclidean distance function. Let $G$ denote a graph defined on $S$. Let $\overset{\longrightarrow}{pq}$ denote a directed edge from point $p$ to $q$. $Cone(\overset{\longrightarrow}{pq}$ $,\alpha)$ denotes a circular cone centered at $\overset{\longrightarrow}{pq}$ with angular diameter $2\alpha$. $B(p,\delta(p,q))$ denotes the open sphere centered at $p$ with radius $\delta(p,q)$.
\subsection{From NNS to ANNS} 
Due to the intrinsic difficulty of exact Nearest Neighbor Search (NNS), most researchers turn to ANNS. The main motivation is to trade a little loss in accuracy for much shorter search time. The formal definition is as follows \cite{Gionis1999Similarity}.
\begin{definition}[Nearest Neighbor Search]
    Given $S$ in $E$, preprocess $S$ to efficiently return $p \in S$ which is closest to $q$.
\end{definition}
This naturally generalizes to the $k-$NNS when we require the algorithm to return $k$ points ($k>1$) which are the closest to $q$. The approximate version of the NNS problem (ANNS) can be defined as follows.
\begin{definition}[$\epsilon-$Nearest Neighbor Search]
    Given $S$ in $E$, preprocess $S$ to efficiently return $p \in S$ such that $\delta(p,q) \le (1 + \epsilon)\delta(r, q)$, where $r$ is the nearest neighbor of $q \in S$, $\epsilon>0$.
\end{definition}

Similarly, this generalizes to the Approximate $k$ Nearest Neighbor Search (AKNNS) when we require the algorithm to return $k$ points ($k>1$) such that
$\forall i = 1, ..., k, \delta(p_i, q) \le (1 + \epsilon)\delta(r, q)$.


\subsection{Non-Graph Based Methods}
For the past decades, various methods are proposed to solve the AKNNS problem efficiently, including hashing-based, tree-based, quantization-based, and graph-based methods. The hashing-based methods try to split the space with hyper-surfaces and organize the dataset with hashing tables. Typical methods include Locality Sensitive Hashing (LSH) \cite{Gionis1999Similarity} and Spectral Hashing \cite{Weiss2008Spectral}. Tree-based methods try to partition the space into sub-regions and index them into tree structures. Representative methods include Randomized KD-Tree \cite{Silpaanan2008Optimised} and R-Tree \cite{beckmann1990r}. Quantization-based methods try to solve the AKNNS problem through reducing the complexity of distance calculations. Specifically, the algorithm quantizes the original data points and represents them as binary codes, which serve as the references to the codebook (quantizers). The complexity of the distance computation can be reduced significantly by computing the approximate distance with the pre-built codebook (quantizers). From a different angle, quantization can also be seen as space-splitting methods. Product Quantization (PQ) \cite{jegou2011product} and Composite Quantization\cite{zhang2014composite} are two typical methods. 

Recently, there are some works focusing on improving the performance of these three types methods in the methodology or engineering level such as \cite{HuangFZFN15, AroraSK018, ge2013optimized}. However, in the famous benchmark in this community \cite{aumuller2017ann} and many works \cite{Hajebi2011Fast, Jin2014Fast, CongEfanna2016, MalkovPLK14, MalkovYHNSW16, Ben2016Fanng, fu2019fast}, the performance of non-graph-based methods are left far behind graph-based ones under fair comparison. Fu \etal~\cite{fu2019fast} try to explain this by their empirical evaluation that NSG scans less than one tenth points to reach the same accuracy.
\begin{figure}[t]
\begin{center}
\includegraphics[width=240pt]{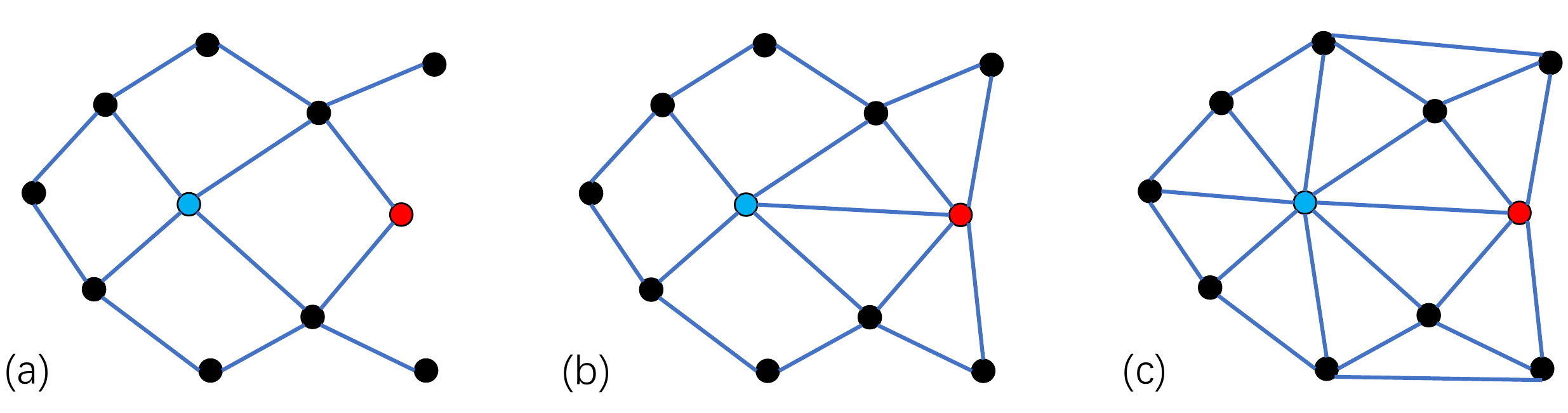}
\end{center}
   \caption{\textbf{An illustration of the influences of sparsity on search complexity} (a), (b), and (c) are three MSNETs on a toy dataset with different sparsity. In particular, (a) is built according to NSG's edge pruning strategy. Let the blue node be the search starting point and the red as the query. In (a) we need two hops via a neighbor of the blue node to reach the red, while in (b) and (c) we only need one hop. As a result, we need 7, 5, and 8 distance calculations to reach the answer on (a), (b), and (c) respectively. (b) delivers the best search performance.}
\label{msnetangle}
\end{figure}

\subsection{Graph-Based Methods.}
Recently, the graph-based methods have attracted wide interest and shown exciting results\cite{aumuller2017ann, shimomura2020survey}. Although various graph indices have been developed, they all use similar search algorithms as shown in Alg. \ref{search_alg}. The main idea of this A*-search-like algorithm is to iteratively discover the nodes which are closer to the query, among the neighbors of the current node. Thus, no matter how complex the graph is, the search time complexity can be roughly decomposed as $ol$ \cite{fu2019fast}, where $o$ is the out-degree of the graph and $l$ is the number of iterations of Alg. \ref{search_alg} (or the length of the ``search path''). From this perspective, the development of the graph indices can be summarized as two aspects: sparsify the graph and reduce the search path lengths.

There are several early graph models with excellent theoretical guarantees (e.g., the Delaunay Graphs \cite{aurenhammer1991voronoi, lee1980two} and the MSNET \cite{dearholt1988monotonic}) or empirical conclusions (\cite{kleinberg2000navigation, boguna2009navigability}) on the search path length. However, these graphs suffer from high indexing complexity or are not well-designed models for the ANNS problem. Recent works are mostly optimized approximations or variants of above structures. Specifically, KNN graph based methods (GNNS \cite{Hajebi2011Fast}, IEH \cite{Jin2014Fast}, Efanna \cite{CongEfanna2016}) stem from the Delaunay Graph. They are designed to reduce the out-degree of the Delaunay Graph in the high dimensions and maintain a considerably short search path; Navigating Small World Graph (NSW) \cite{malkov2014approximate} approximates the Navigating Small World Network \cite{kleinberg2000navigation} and is modified to adapt to high dimensions. Hierarchical Navigating Small World Graph (HNSW) \cite{MalkovYHNSW16} further improves the NSW by stacking multiple NSWs of different scopes, which intuitively shortens the search paths via different short-cuts on different graph layers; FANNG \cite{Ben2016Fanng} and HNSW \cite{MalkovYHNSW16} use similar edge-selection strategies as the Relative Neighborhood Graph (RNG)\cite{jaromczyk1992relative} and RNG*(S)\cite{arya1993approximate} to sparsify the graphs; Based on the MSNET \cite{dearholt1988monotonic}, RNG*\cite{arya1993approximate} and heuristic works HNSW and FANNG, Fu~\etal~develop the MRNG \cite{fu2019fast} with a good out-degree upper bound and a short search path guarantee. Further, they propose the NSG as a variant of the MRNG to reduce the indexing complexity. In their experimental study of several graph-based papers \cite{Hajebi2011Fast, Jin2014Fast, CongEfanna2016, MalkovPLK14, MalkovYHNSW16, Ben2016Fanng, fu2019fast}, the graph-based approaches outperform the non-graph based ones significantly on several frequently-used public datasets. Among them, NSG \cite{fu2019fast} is in the leading position to the best of our knowledge.

\begin{figure}[t]
\begin{center}
\includegraphics[width=220pt]{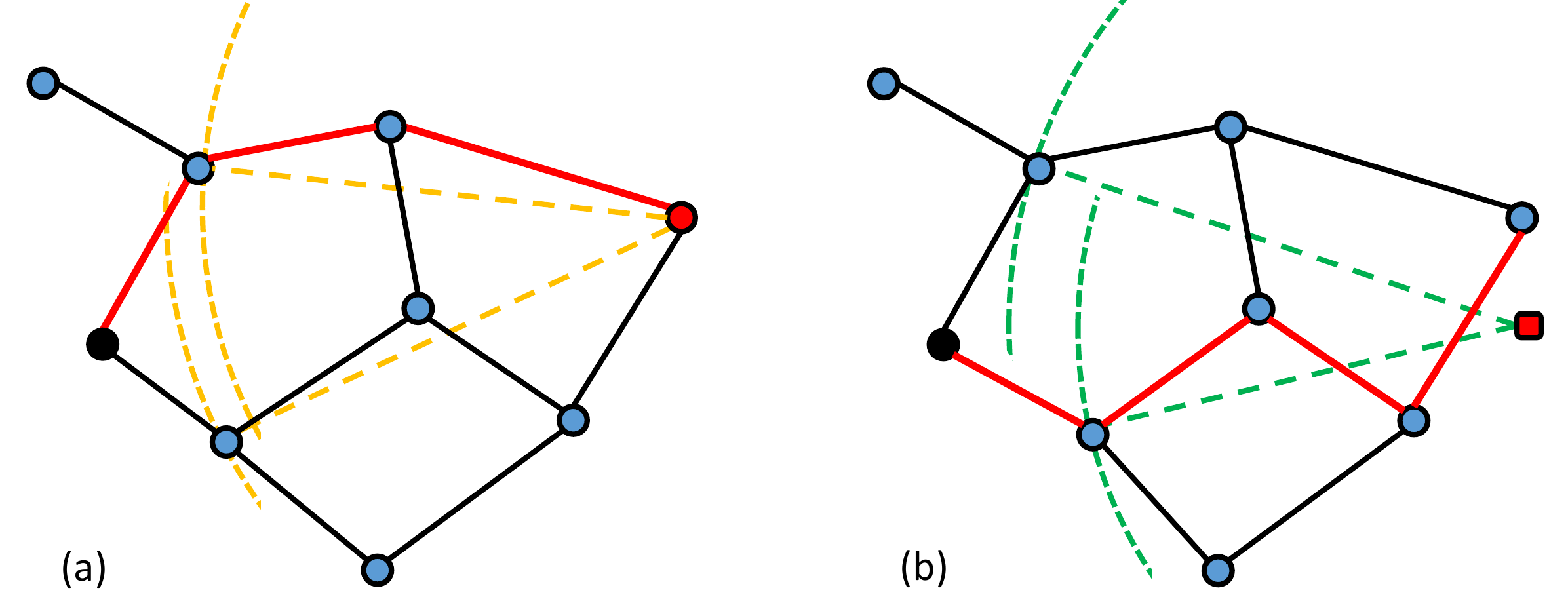}
\end{center}
   \caption{\textbf{An illustration of the difference between searching for an indexed query and an unindexed query.} The graph is built according to NSG's edge pruning strategy. The black node denotes the search-start node, and the red node denotes the query node. (a) shows the search for an indexed query, while (b) shows the search for the NN of an unindexed query. The distance to the query determines the choice of the first step. Therefore, though they arrive at the same location finally, the routes are completely different (red lines).}
\label{in-out-db-query}
\end{figure}

\subsection{Closely Related Works}
Most existing graph-based methods can be viewed as pruning edges from the complete graph or a KNNG. Different pruning strategy leads to different properties. This work and a closely related prior work, NSG\cite{fu2019fast}, is mainly built upon MSNET\cite{dearholt1988monotonic} due to MSNET's fine theoretical properties. We will briefly introduce MSNET, NSG, and other two similar works, RNG*\cite{arya1993approximate} and DPG\cite{li2016approximate} in this section.

\noindent
\textbf{i. MSNET.} As introduced in Sec. 1, MSNET \cite{dearholt1988monotonic} guarantees a ``monotonic" search process (with Alg. \ref{search_alg}) towards the query. Fu \etal~\cite{fu2019fast} prove that the search complexity on an MSNET grows in $O(n^{\frac{1}{d}}\log n /\triangle{r})$ regarding $n$, where $\triangle{r}$ can be treated as a constant in practice. Dearholt~\etal propose a minimal MSNET with appealing search performance. They build an RNG first and add edges to the RNG greedily (seek shortest edges meeting the monotonicity) until the whole graph ensures monotonicity, but this indexing is in high-order polynomial complexity.

\smallskip
\noindent
\textbf{ii. RNG* and RNG*(S)} Arya~\etal \cite{arya1993approximate} propose Randomized Neighborhood Graph (RNG*) with a theoretical search complexity of $O(\log^3n)$. RNG* is built by partitioning the space around a given point with multiple cones and select neighbors in each cone with a randomized greedy procedure, i.e., randomly permute (index) all potential neighbors in this cone and seek $O(\log n)$) approximately closest neighbors. Then this procedure is applied for each point. Though theoretically attractive, RNG* is of high space complexity (due to high degree), high indexing time complexity, and high search complexity in high dimensions. Therefore they propose a sparse variant RNG*(S), which can be seen as sparsifying the complete graph with a variant of RNG's pruning strategy. In addition, they also use auxiliary structure, KD-tree, to select search starting point to further accelerate the search. Still, RNG*(S) needs $O(n^3)$ time complexity for indexing.


\begin{algorithm}[t]\small
	\caption{SSG-Build($\textbf{D}$, $\textbf{a}$)}
	\label{ssgbuild_alg}
	\begin{algorithmic}[1]
		\Require dataset $\textbf{D}$, angle $\textbf{a}$
		\Ensure SSG $\textbf{G}$
		\ForAll {node $d_i$ in $\textbf{D}$}
			\State $L=\emptyset$
			\ForAll {node $d_j$ in $\textbf{D}-\{d_i\}$}
				\State $l_{ij}$ = length $d_id_j$
				\State $L$.add(($d_j$, $l_{ij}$))
			\EndFor
			\State $L$.sort() // ascending order of $l_{ij}$.
			\State $P$=$\{L[0]\}$
			\ForAll {node $(d_j, l_{ij})$ in $\textbf{L}$}
				\State flag=True
				\ForAll {node $(d_k, l_{jk})$ in $\textbf{P}$}
					\If {$cos \angle{d_jd_id_k} > cos \angle{a}$}
					\State flag = False
					\EndIf
				\EndFor
				\If {flag = True}
				\State P.add($(d_j, l_{ij})$)
				\EndIf
			\EndFor
			\State $G[d_i] = P$
		\EndFor
				
	\end{algorithmic}
\end{algorithm}	

\smallskip
\noindent
\textbf{iii. RNG*(S), MRNG and NSG.} MRNG and RNG*(S) are basically the same, which can also be viewed as a sparsified graph from the complete graph with a variant of RNG pruning. RNG is proved to be nonmonotonic\cite{dearholt1988monotonic}, while Arya~\etal prove RNG*(S) is monotonic\cite{arya1993approximate}. In an  RNG*(S), a point $r$ can join $p$'s neighbors if and only if $pr$ is not the longest edge in any existing triangle. Fu~\etal \cite{fu2019fast} extend MSNET\cite{dearholt1988monotonic} and RNG*(S) theory with search complexity and call it MRNG. Similarly, due to MRNG's high indexing cost ($O(n^2\log n)$), they propose a variant, NSG. NSG is built based on a pre-built approximate KNNG, then the edge selection is performed through a search-and-prune procedure. In addition, NSG uses a fix search starting point (navigating node), and they design a DFS-like routing to ensure a unidirectional path from the navigating node to each other node for connectivity.


\smallskip
\noindent
\textbf{iv. DPG.} Similar to this work, the Diversified Proximity Graph (DPG) \cite{li2016approximate} uses angles as a key element in their pruning criteria. They first build a KNNG, and then prune half the edges down to maximize the average angle between edges. Further, they add opposite edges for all unidirectional edges to get an undirected graph.

\smallskip
\noindent
\textbf{v. Differences regarding this work.} 1) We will prove SSGs belong to the MSNET family later, and SSGs are different from the minimal MSNET because we do not think sparser graphs are necessarily better. On the contrary, SSGs are much denser than SOTA methods. 2) To some extent, building SSGs can also be viewed as splitting the space into cones like RNG*. However, SSG indexing algorithm involves no randomness like RNG*. 3) Part of proofs of SSG's theoretical properties are inspired by RNG*(S) and MRNG. But SSGs are different from MRNG, NSG, and RNG*(S), because the pruning strategy of SSGs consider both length of edges and angle between edges, while those of the other methods only consider the lengths. See the Appendix for more details.


\begin{figure}[t]
\begin{center}
\includegraphics[width=170pt, height=120pt]{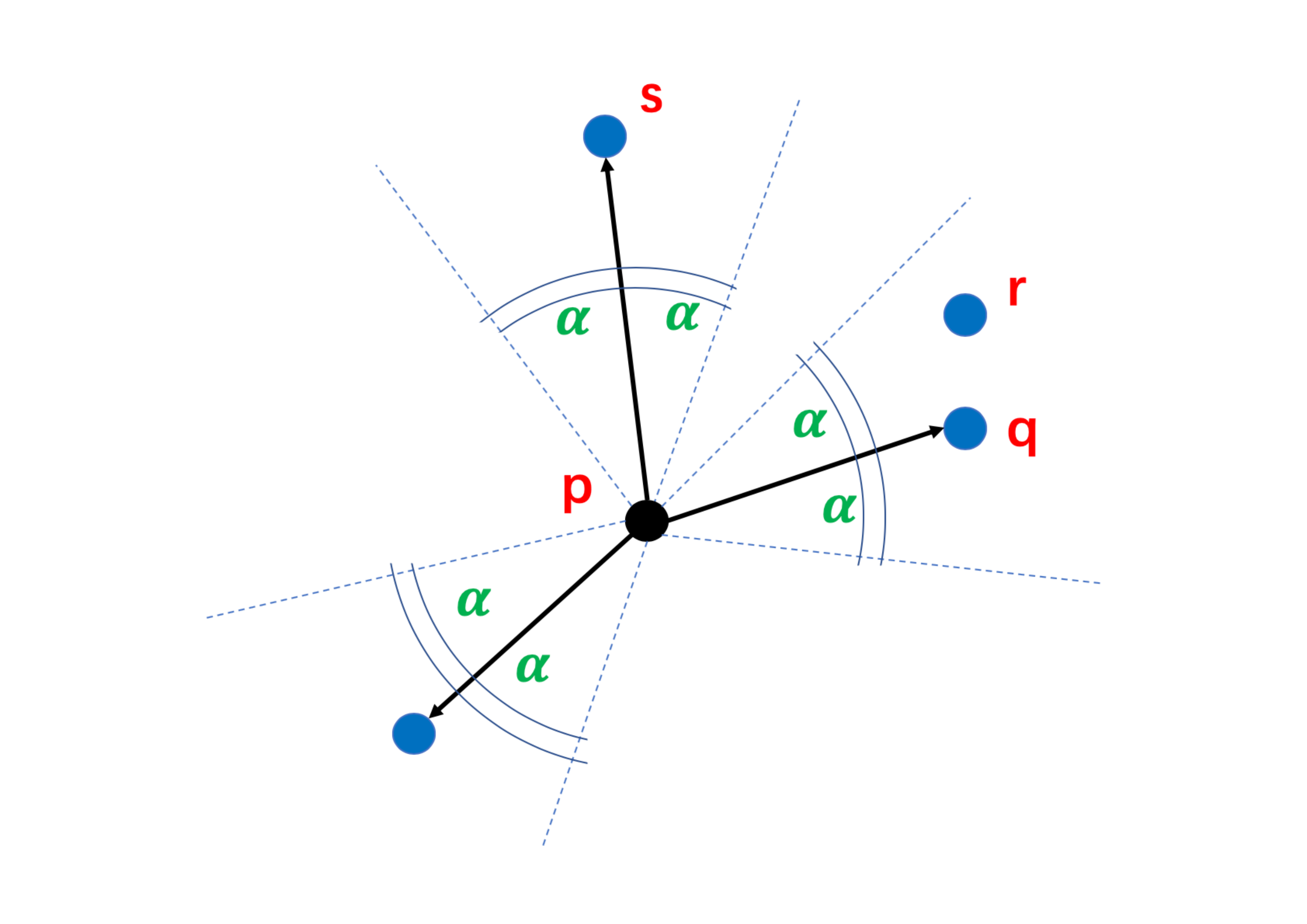}
\end{center}
   \caption{\textbf{An illustration of selecting edges for one point in a toy 2D SSG.} .}
\label{ssg_rule}
\end{figure}

\section{Methodology}
\label{method}
\subsection{Motivation}
As discussed in Sec. 2, many prior graph-based indices can be regarded as pruned graphs from the complete graph or a KNNG. They have basically reached a consensus that sparsifying the graph and reducing number of hops during search (with Alg. \ref{search_alg}) are both crucial to the search performance\cite{MalkovYHNSW16, Ben2016Fanng, fu2019fast}. However, they generally focus on accelerating the fast routing but neglect the importance of the sparsity. Interestingly, we find that there may exist an optimal degree for a given setting (data volume, distribution, etc). Sparser or denser graphs show inferior search performance (Fig. \ref{msnetangle}).

Specifically, typical SOTA algorithm such as NSG\cite{fu2019fast} and HNSW\cite{MalkovYHNSW16} only use ``distance-based'' pruning criteria (e.g., remove relatively longer edges as long as the connectivity is guaranteed). They do not discuss the influence of such criteria on the sparsity. In typical man-made message transfer system like communication satellites, both ``angle'' and ``distance'' are considered intuitively, i.e., neighbors are distributed omnidirectionally around each satellite, and communications mostly happen between nearest neighbors.

In addition, in high-dimensional similarity retrieval literature, indexed queries and unindexed queries are seldom distinguished. This may be because most algorithms may not behave differently treating two types of queries. However, we find it apparently influences graph-based methods (see an example in Fig. \ref{in-out-db-query}). To the best of our knowledge, existing SOTA graph-based algorithms have not discussed this problem formally. 

Based on above observations, we are seeking a novel graph index structure which 1) ensures fast traversal on the graph, 2) enables flexible sparsity adjustment for search performance optimization, and 3) takes unindexed queries into consideration. Specifically, we try to achieve goal 1) by designing a new graph-based index under the MSNET framework, and achieve goal 2) and 3) by involving both ``angle'' and ``distance'' into the pruning strategy. We use angle between edges to control the sparsity and distance between neighbors to control the locality sensitiveness. We will formally introduce SSG as follows.


\subsection{Satellite System Graph}
\label{definition}
We first give a formal definition of MSNET as follows:
\begin{definition}[MSNET]
    $G$ is an MSNET if and only if $\forall$ point $q, s \in G$, there exists at least one "monotonic path" $\{s(n_0),n_1,n_2,...,n_L,q\}$, satisfying that $\forall 0\le l< L, \delta(n_l,q) > \delta(n_{l+1},q)$\cite{dearholt1988monotonic}.
\end{definition}
It has been proved that the search trajectory produced by Alg. \ref{search_alg} on an MSNET is guaranteed to be a monotonic path\cite{fu2019fast}. The length expectation of the path over all node-pairs is $O(n^{\frac{1}{d}}\log n /\triangle{r})$, where $\triangle{r}$ is almost a constant in the empirical evaluation. 

Based on aforementioned motivation, we design a new pruning strategy to obtain an SSG. For any $p \in S$, we list the remaining points in ascending order of $\delta(p, \cdot)$. For any two edges $\overset{\longrightarrow}{pq}$ and $\overset{\longrightarrow}{pr}$, if $\angle rpq < \alpha$ (a hyper-parameter), we consider it as a ``conflict'' and discard the \textbf{longer} edge. This process is repeated until all points are pruned (Alg. \ref{ssgbuild_alg}, Fig. \ref{ssg_rule}). Intuitively, the edges around each node are distributed evenly and maintain a near-minimal neighborhood coverage like satellites. We formally define SSG as follows:
\begin{definition}[SSG]
In $E$, an $SSG$ is defined as the set of directed edges satisfying the property: for any edge $\overset{\longrightarrow}{pq}$, $\overset{\longrightarrow}{pq} \in SSG$ if and only if $Cone(\overset{\longrightarrow}{pq},\alpha) \cap B(p,\delta(p,q)) \cap S = \emptyset$ or $\forall r \in Cone(\overset{\longrightarrow}{pq},\alpha) \cap B(p,\delta(p,q)) \cap S, \overset{\longrightarrow}{pr} \notin SSG$, where $0 \le \alpha \le 60^{\circ}$ is a hyper-parameter.
\end{definition}

\smallskip
\noindent
\textbf{Properties of SSG} is as follows:

\begin{theorem}
\label{theorem1}
An SSG is an MSNET.
\end{theorem}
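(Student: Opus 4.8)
The plan is to reduce the global statement to a single local descent step. It suffices to show that for any target $q$ and any current node $p \neq q$, there is an out-edge $\overset{\longrightarrow}{pp'} \in SSG$ with $\delta(p',q) < \delta(p,q)$. Granting this, a monotonic path from an arbitrary start $s$ to $q$ is built greedily: set $n_0 = s$, and while the current node is not $q$, move to the strictly closer out-neighbor the claim provides. Since each step strictly decreases the distance to $q$ and $S$ is finite, no node can recur, so the sequence is finite and must terminate at $q$; by construction it satisfies $\delta(n_l,q) > \delta(n_{l+1},q)$ for all $l$, which is exactly the monotonic-path condition of the MSNET definition. This skeleton mirrors the MSNET/MRNG monotonicity argument of Fu \etal~\cite{fu2019fast}, but the decisive inequality will now come from the angle constraint rather than from an edge-length comparison.

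To establish the local step, I would split on whether the edge $\overset{\longrightarrow}{pq}$ itself lies in the SSG. If $\overset{\longrightarrow}{pq} \in SSG$, take $p' = q$, since $\delta(q,q) = 0 < \delta(p,q)$. If $\overset{\longrightarrow}{pq} \notin SSG$, I negate the defining condition of the SSG: the failure of both disjuncts forces $Cone(\overset{\longrightarrow}{pq},\alpha) \cap B(p,\delta(p,q)) \cap S \neq \emptyset$ together with the existence of some $r$ in that intersection satisfying $\overset{\longrightarrow}{pr} \in SSG$. This $r$ is the candidate $p'$: it already carries an SSG edge from $p$, membership in the open ball gives $\delta(p,r) < \delta(p,q)$, and membership in the cone gives $\angle rpq \le \alpha$.

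It remains to verify $\delta(r,q) < \delta(p,q)$. Writing $a = \delta(p,r)$, $b = \delta(p,q)$, and $\theta = \angle rpq$, the law of cosines gives $\delta(r,q)^2 = a^2 + b^2 - 2ab\cos\theta$, so (dividing by $a>0$, which holds since $r\neq p$) the desired bound $\delta(r,q) < b$ is equivalent to $a < 2b\cos\theta$. Here the hyper-parameter range is exactly what makes the argument close: because $\theta \le \alpha \le 60^\circ$ we have $\cos\theta \ge \tfrac{1}{2}$, hence $2b\cos\theta \ge b > a$. I expect this geometric estimate to be the crux of the proof — in particular, recognizing that the constraint $\alpha \le 60^\circ$ is precisely the threshold forcing the cosine term to dominate so that the projected neighbor $r$ is genuinely closer to $q$. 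The surrounding work (cleanly negating the SSG membership condition and the finiteness/termination bookkeeping) is routine by comparison.
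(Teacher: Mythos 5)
Your proposal is correct and follows essentially the same route as the paper's proof: both reduce monotonicity to the local descent step, obtain the edge $\overset{\longrightarrow}{pr}$ by negating the SSG membership condition when $\overset{\longrightarrow}{pq}\notin SSG$, and close with the same geometric fact that $\delta(p,r)<\delta(p,q)$ and $\angle rpq\le\alpha\le 60^{\circ}$ force $\delta(r,q)<\delta(p,q)$ (the paper phrases this as ``$rq$ is not the longest edge in $\triangle pqr$'' and invokes Lemma~1 of the NSG paper for the reduction, while you spell out the law-of-cosines computation and the greedy-path/finiteness bookkeeping directly).
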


\begin{theorem}
\label{theorem2}
For $S$, randomly distributed in $E^d$, and any query $q \in S$, the search complexity from a random starting point is $O(Dn^{1/d}\log(n^{1/d})/\triangle{r})$, where $D$ is the degree upper-bound of SSG, $n$ is the size of the dataset, and $\triangle{r}$ is the lower-bound of length differences of edges in any non-isosceles triangles.
\end{theorem}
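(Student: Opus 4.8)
The plan is to bound the running time of Alg.~\ref{search_alg} by separating it into a per-iteration cost and a number of iterations, and controlling each factor independently. In every iteration the algorithm expands exactly one node and inspects all of its out-neighbors, so the work done per iteration is at most the maximum out-degree, which is $D$ by hypothesis. The number of iterations is the length $L$ of the trajectory that Alg.~\ref{search_alg} traces from the random start node $s$ to $q$. Hence the total expected cost is $O\bigl(D\cdot\mathbb{E}[L]\bigr)$, and the problem reduces to bounding the expected trajectory length $\mathbb{E}[L]$.

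To bound $\mathbb{E}[L]$ I would first invoke Theorem~\ref{theorem1}: since an SSG is an MSNET, the trajectory produced by Alg.~\ref{search_alg} is a \emph{monotonic} path $\{s(n_0),n_1,\dots,n_L,q\}$ with $\delta(n_l,q)$ strictly decreasing in $l$, and I would then adapt the expected-monotonic-path-length analysis already established for MSNETs in \cite{fu2019fast}. Writing $r_l=\delta(n_l,q)$, the argument rests on two ingredients. First, a geometric per-hop progress bound: in the triangle $n_l n_{l+1} q$ monotonicity forces $r_{l+1}<r_l$, and the guaranteed gap $\triangle r$ between distinct edge lengths in any non-isosceles triangle of data points lower-bounds the radial progress $r_l-r_{l+1}$, so the distance to $q$ cannot stall. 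Second, a probabilistic scale-counting argument: for $S$ distributed in $E^d$, a random start lies at distance $O(n^{1/d})$ from $q$ once coordinates are normalized so the nearest-neighbor spacing is $\Theta(1)$, and the decreasing sequence $r_0>r_1>\cdots$ spans $O(\log(n^{1/d}))$ dyadic scales, the expected number of hops inside each scale being controlled by the expected point counts in the relevant cone-and-annulus regions. Combining the two yields $\mathbb{E}[L]=O\bigl(n^{1/d}\log(n^{1/d})/\triangle r\bigr)$, and multiplying by the per-iteration cost $D$ gives the claimed $O\bigl(Dn^{1/d}\log(n^{1/d})/\triangle r\bigr)$.

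The main obstacle will be the path-length estimate rather than the bookkeeping, and I anticipate two specific difficulties. First, turning the qualitative ``distance strictly decreases'' guarantee of monotonicity into a quantitative per-hop progress tied to $\triangle r$ requires care, because the neighbor guaranteed by the MSNET property may be only marginally closer to $q$; controlling the expectation of $r_l-r_{l+1}$ under the random distribution is the delicate point. Second, I must verify that the angle-plus-distance cone pruning defining the SSG does not inflate the hop count relative to a generic MSNET. Because the SSG deliberately distributes out-edges omnidirectionally, I expect this structure to help rather than hurt, and ideally to be what produces the slightly sharper $\log(n^{1/d})$ factor (versus $\log n$) in the statement; establishing that the SSG pruning preserves, and possibly tightens, the MSNET path-length bound is the crux of the argument.
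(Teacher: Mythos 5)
Your proposal follows essentially the same route as the paper's proof: decompose the cost of Alg.~\ref{search_alg} into (per-hop work bounded by $D$) times (monotonic path length), invoke Theorem~\ref{theorem1} to get monotonicity, and import the $O(n^{1/d}\log(n^{1/d})/\triangle r)$ expected path-length bound for MSNETs from \cite{fu2019fast} rather than re-deriving it. The only substantive difference is that the paper additionally proves (via Yao's cone-covering construction, Lemma~\ref{lemma_degree}) that $D$ is bounded by a constant depending only on $\alpha$ and $d$, whereas you take $D$ as given; since the statement itself introduces $D$ as the degree upper-bound, this does not invalidate your argument, though it is what makes the bound meaningful as $n$ grows.
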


\begin{theorem}
\label{theorem3}
For $S$ (randomly distributed in $E^d$), any query $q \notin S$, and $q$'s nearest neighbor $r \in S$ , the probability that each step on the search path is monotonic to both $r$ and $q$ is $0.5+\epsilon, 0<\epsilon\le 0.5$, under the condition that for any neighbor $s$ of the node $p$ in the current search step, $\delta(p,s) < \delta(p,q)$. Further, $\epsilon = 0.5$ when we set $\alpha \le 30^{\circ}$ for the SSG. 
\end{theorem}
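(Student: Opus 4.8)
The plan is to reduce both monotonicity requirements to geometric membership conditions and then to control the single quantity that couples them, namely the angle $\beta = \angle qpr$ at the current node $p$. A step from $p$ to a neighbor $s$ is monotonic to $q$ exactly when $s \in B(q,\delta(p,q))$ and monotonic to $r$ exactly when $s \in B(r,\delta(p,r))$, both open balls having $p$ on their boundary, so ``monotonic to both'' means $s \in B(q,\delta(p,q)) \cap B(r,\delta(p,r))$. First I would use the hypothesis $\delta(p,s) < \delta(p,q)$ to rewrite the monotonic-to-$q$ condition through the law of cosines as $\delta(p,s) < 2\delta(p,q)\cos\angle spq$; since $\delta(p,s) < \delta(p,q)$, this holds whenever $\angle spq < 60^{\circ}$, and more generally for every $\angle spq$ below a threshold lying strictly between $60^{\circ}$ and $90^{\circ}$. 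The same computation characterizes the monotonic-to-$r$ region through $\angle spr$. Thus, to leading order near $p$, both regions are spherical caps about the directions $\overset{\longrightarrow}{pq}$ and $\overset{\longrightarrow}{pr}$, and whether a step lands in their intersection is governed by $\beta$.

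The crux is the bound on $\beta$. Because $r$ is the nearest neighbor of $q$ and $p \in S$, we have $\delta(q,r) \le \delta(q,p) = \delta(p,q)$. Applying the law of sines in triangle $pqr$ gives $\sin\beta = \delta(q,r)\sin\angle prq / \delta(p,q) \le \delta(q,r)/\delta(p,q) \le 1$, hence $\beta \le 90^{\circ}$, with the inequality strict away from a measure-zero set of configurations under the random-data assumption. This is precisely the positive coupling between ``progress toward $q$'' and ``progress toward $r$'': the two monotonic directions differ by at most a right angle, so the $r$-cap covers at least half of the $q$-cap.

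For the probability itself I would model the direction of the step taken by Alg.~\ref{search_alg} as spherically symmetric around $\overset{\longrightarrow}{pq}$, justified by the random distribution of $S$ together with the omnidirectional out-edge structure of the SSG, and then compute the fraction of the monotonic-to-$q$ region that also lies in the monotonic-to-$r$ region as a two-half-space overlap. The key simplification is that this overlap depends only on the plane spanned by $\overset{\longrightarrow}{pq}$ and $\overset{\longrightarrow}{pr}$ and is therefore independent of the ambient dimension $d$; projecting onto that plane reduces it to the planar value $(\pi-\beta)/\pi$ for the conditional probability. With $\beta \le 90^{\circ}$ this equals $0.5 + \epsilon$ where $\epsilon = 0.5 - \beta/\pi \in (0,0.5]$, which is the stated bound.

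Finally, for the refinement $\epsilon = 0.5$ when $\alpha \le 30^{\circ}$, I would invoke the coverage half of the SSG definition rather than the probabilistic argument: the out-edge $\overset{\longrightarrow}{ps}$ covering the direction $\overset{\longrightarrow}{pr}$ satisfies $\angle spr \le \alpha$ and $\delta(p,s) \le \delta(p,r)$, hence by the $60^{\circ}$ criterion (and $\alpha \le 30^{\circ} < 60^{\circ}$) it is monotonic to $r$, while it is monotonic to $q$ once $\angle spq \le \alpha + \beta < 60^{\circ}$. A denser out-neighborhood (smaller $\alpha$) therefore forces the existence of a neighbor monotonic to both and drives the event to probability one. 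I expect the main obstacle to lie exactly here: the coverage argument as stated secures probability one only in the bulk regime where $\beta$ is moderate, whereas $\beta$ may approach $90^{\circ}$ in the endgame when $\delta(q,r)$ becomes comparable to $\delta(p,q)$, so turning the coverage guarantee into a $\beta$-free statement requires careful bookkeeping of the interaction between $\alpha$, the coupling angle $\beta$, and the gap between the $60^{\circ}$ and $90^{\circ}$ characterizations of monotonicity, and reconciling the probabilistic direction model with the deterministic cone structure so that the spherical-cap fractions are measured against the correct reference.
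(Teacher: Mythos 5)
Your attempt diverges from the paper's proof precisely where it breaks. The paper's randomness is over the data, localized by the SSG structure: by the SSG definition there is an out-edge $\overset{\longrightarrow}{ps}$ with $s \in Cone(\overset{\longrightarrow}{pr},\alpha)\cap B(p,\delta(p,r))\cap S$, so the step to $s$ is automatically monotonic toward $r$ (cone half-angle $\alpha\le 60^{\circ}$ plus $\delta(p,s)<\delta(p,r)$); the hyperplane containing $pr$ and normal to the plane $pqr$ splits this cone into two halves, $s$ falls on $q$'s side with probability $0.5$, and on that side the same $60^{\circ}$ triangle argument (now using the hypothesis $\delta(p,s)<\delta(p,q)$) yields monotonicity toward $q$ as well; the opposite half contributes the $\epsilon$ through the condition $\angle spq\le\angle spr+\angle qpr<60^{\circ}$, and that same condition is what the paper invokes to force $\epsilon=0.5$ when $\alpha\le 30^{\circ}$. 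You instead model the step as a spherically symmetric random direction about $\overset{\longrightarrow}{pq}$ and compute the conditional probability of progress toward $r$ given progress toward $q$ as the lune fraction $(\pi-\beta)/\pi$. Nothing in Alg.~\ref{search_alg} or in the SSG justifies that model: the search moves along deterministic graph edges, the theorem's randomness lives in where the data places the cone-guaranteed neighbor, and a conditional probability is a different quantity from the probability of the joint event the theorem asserts. Moreover, your bound $(\pi-\beta)/\pi\ge 1/2$ requires the $90^{\circ}$ half-space characterization of monotonicity, which is valid only for infinitesimal steps; under the $60^{\circ}$-cap characterization that your hypothesis $\delta(p,s)<\delta(p,q)$ actually licenses, the overlap fraction is not bounded below by $1/2$: in the plane with $\beta=90^{\circ}$, the caps $[-60^{\circ},60^{\circ}]$ and $[30^{\circ},150^{\circ}]$ intersect in $[30^{\circ},60^{\circ}]$, one quarter of the first cap. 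So the first claim is not established by your argument.

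The refinement $\epsilon=0.5$ for $\alpha\le 30^{\circ}$ you explicitly leave open, and that concession is the decisive gap, since this is the half of the theorem that gives SSG its unindexed-query guarantee. Your coverage argument needs $\angle spq\le\alpha+\beta<60^{\circ}$, but your angle bound gives only $\beta<90^{\circ}$ (and, incidentally, the law-of-sines derivation of that bound is a non sequitur, since $\sin\beta\le 1$ holds for every angle; the correct route is the side--angle inequality: $\delta(q,r)\le\delta(q,p)$ gives $\angle qpr\le\angle qrp$, hence $2\angle qpr<180^{\circ}$). The paper obtains certainty for $\alpha\le 30^{\circ}$ not from a bound on $\beta$ but from its cone decomposition: the only unfavorable event is that $s$ lands in the half of $Cone(\overset{\longrightarrow}{pr},\alpha)$ away from $q$, and there the requirement $\angle spr+\angle qpr<60^{\circ}$ is claimed to hold once $\alpha\le 30^{\circ}$, implicitly treating $\angle qpr$ as small, i.e., $q$ angularly close to its nearest neighbor $r$ as seen from $p$. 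Your observation that $\beta$ can approach $90^{\circ}$ in the endgame is in fact a legitimate criticism that touches the paper's own Case 1 and its $\alpha\le 30^{\circ}$ claim as well, but identifying the obstacle is not the same as overcoming it: as written, your attempt proves neither the $0.5+\epsilon$ bound nor the $\epsilon=0.5$ refinement.
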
 Please refer to the proofs in the Appendix.

A simple explanation of above theorems and proofs is that, under a constraint on $\alpha$ ($0 \le \alpha \le 60^{\circ}$), SSG supports monotonic search for indexed queries. With a more strict constraint ($0 \le \alpha \le 30^{\circ}$), SSG guarantees monotonic search for both indexed and unindexed queries. In other words, the near-logarithmic theoretical search complexity can also apply to unindexed queries when $0 \le \alpha \le 30^{\circ}$. In addition, it is important to note that when $30 < \alpha \le 60^{\circ}$, searching for the unindexed queries is not guaranteed to be monotonic to the answers. This does not mean the search will "turn back". Because the search algorithm is greedy, it would just move towards a sub-optimal direction. In the worst case ($\alpha = 60^{\circ}$), there is still a probability higher than 0.5 to make the optimal choice at each time. As long as the optimal path is short, the sub-optimal path will not be long in expectation. Note that there is a pre-condition in Theorem \ref{theorem3}. The condition means that we only discuss all points on the search path except for the last a few steps. Consider that $\exists s$, a neighbor of $p$, satisfies that $\delta(p,s) \ge \delta(p,q)$. It means $q$ is already in $p$'s very close neighborhood. Because we only know the position of the unindexed query, but not the answer position, we use this as a terminal sign for generality.

Unlike previous works, SSG is not unique on a given dataset, a different $\alpha$ determines a different SSG. As discussed above, there is a trade-off between the graph sparsity and search path length. With a proper degree, we may achieve the optimal search performance. It is not difficult to find out that with a larger $\alpha$, the SSG will be more sparse. In theory, we can assign a different $alpha$ for each node adaptively to get an optimal SSG on a given dataset.

Note that above theoretical properties only hold for 1-NN search. Like most prior works \cite{arya1993approximate, fu2019fast}, we do not make more efforts on generalizing to the k-NN scenario. However, in most prior empirical evaluations, the k-NN performance can be approximately generalized from 1-NN.

\section{Navigating Satellite System Graph}
\label{NSSGsec}

\begin{table}[t]
\caption{The results of the SSGs and the MRNG on SIFT10K. AOD denotes the average out-degree of the graph, MOD denotes the max-out-degree of the graph, $L_{indexed}$ denotes the average search path length for indexed queries, and $L_{unindexed}$ denotes the average search path length for unindexed queries. SSG$_a^{\circ}$ denotes the SSG with $\alpha = a$, while SSG$_{a^{\circ}}tr$ denotes the truncated SSG$_a^{\circ}$. } 
\label{exp:exact}
\centering
\begin{tabular}{ccccc}
\hline
Graph & AOD & MOD & $L_{indexed}$ & $L_{unindexed}$\\
\hline
MRNG & 18 & 66 & 2.76 & 4.98 \\
SSG$_{60^{\circ}}$ & 40 & 111 & 2.18 & 3.91 \\
SSG$_{30^{\circ}}$  & 121 & 746 & 1.80 & 1.79 \\
SSG$_{60^{\circ}tr}$ & 19 & 40 & 2.20 & 3.95 \\
SSG$_{30^{\circ}tr}$  & 41 & 120 & 1.89 & 1.95 \\
\hline
\end{tabular}
\end{table}

\subsection{From SSG to NSSG}
SSG presents excellent ANNS properties theoretically, but there are mainly two problems with SSG: high indexing complexity (O$(n^3)$ for Alg. \ref{ssgbuild_alg}) and high degree of the graph. For practical use, we aim to propose a practical variant (named as Navigating SSG) which is both temporal-efficient and space-efficient. Naturally, \textbf{the theoretical properties of SSG will not hold for this variant,} but with proper heuristics, the actual performance is close to theoretical analysis. The heuristics are based on the observations in our empirical study as follows:

We design a small pre-experiment on the sift10K\footnote{http://corpus-texmex.irisa.fr/} dataset (10k points, 128 dimension) to show the relationship between graph degree and search path length. Different MSNETs (MRNG and different SSGs) are included in the comparison. MRNG is a theoretical graph model proposed in the NSG work \cite{fu2019fast}. The results are shown in Table \ref{exp:exact}. 

We can see that though the average search path length on the SSG is smaller, but the average degree of the graph is too large. As a result, searching on the exact SSGs has no advantages over the MRNG. Further, we perform this experiments on "truncated" SSGs. The truncated SSG is obtained by pruning the edges from the exact SSG. For example, in this experiment, we only keep 40 nearest neighbors and remove longer edges for each node in the SSG$_{60^{\circ}}$ to get SSG$_{60^{\circ}}tr$. We find that it almost makes no difference to the search path lengths, but the sparsity of the graph is increased significantly. In other words, this illustrates that most of the \textbf{long edges in the exact SSGs contribute little to the search routing}. The "effective" edges are mainly distributed in a small neighborhood around each node. 


\begin{algorithm}[t]
	\caption{NSSGIndexing($D$, $l$, $r$, $s$, $\alpha$)}
	\label{SSG_build_alg1}
	\begin{algorithmic}[1]
		\Require dataset $D$, candidate set size $l$, maximum out-degree $r$, number of navigating points $s$, minimum angle $\alpha$.
		\Ensure an NSSG $G$.
		\State Build an approximate $k$NN graph $G_{knn}$.
		\State $G=\emptyset$.
		\ForAll{node $i$ in $G_{knn}$}
		\State $P=\emptyset$.
		\ForAll{neighbor $n$ of node $i$}
		\State$P$.add($n$).
		\ForAll{neighbor $n'$ of node $n$}
		\State$P$.add($n'$).
		\EndFor
		\State remove the duplicated nodes in $P$.
		\If{$P$.size() $\ge l$}
		\State break.
		\EndIf
		\EndFor
		\State Perform SSG's pruning strategy on $P$.
		\State Update $G$ with selected edges. 
		\EndFor
		\ForAll{node $i$ in $G$}
		\ForAll{node $j$ in node $i$'s neighbors} 
		\State try adding node $i$ to $j$'s neighbors according
		\State to SSG's pruning criteria and avoid duplicates.
		\State remove longer edges if exceeding max-degree $r$.
		\EndFor
		\EndFor
		\State Random select $s$ points from the datasets as $NV$.
		\ForAll{point $i$ in $NV$}
		\State Strengthen the connectivity of the graph with 
		\State DFS-spanning from $i$.
		\EndFor
		\State return $G$.
	\end{algorithmic}
\end{algorithm}

If long edges are ``useless'' in SSG, we do not need to access them in the first place. Based on this observation, we can build a graph similar to SSG with low indexing complexity with the following steps: for each node, 1) generate a candidate neighbor set effectively covering the close neighborhood; 2) select the neighbors within this set through the same strategy in Alg. \ref{ssgbuild_alg}.

Step 1) can be achieved efficiently by building a high-quality approximate K-NN graph, where we can easily get the nearest k neighbors for each node. For most existing approximate K-NN graph construction algorithms, it is faster if $k(\ll n)$ is small. For a K-NN graph with a small $k$, we can expand the neighborhood coverage to get the candidate neighbor set by checking the neighbors' neighbors (2-hop neighbors). A sufficient neighborhood coverage is essential because the SSG edge selection process will refuse most nodes in the candidate set, and after the selection, we need ensure the out-edges evenly distributed in different directions for efficient search routing.

Step 2) is almost the same as the selection algorithm in Alg. \ref{ssgbuild_alg}, except that in Alg. \ref{ssgbuild_alg} the selection is performed on the edges between each node and all the rest $n-1$ nodes. We only perform the selection on the candidate set from Step 1). 

Considering that the "neighbor-propagation" in Step 1) may not be the most effective coverage of the neighborhood, we propose \textit{reverse neighbor checking} to refine the graph which is similar to the technique used in \cite{malkov2014approximate, MalkovYHNSW16, li2016approximate}. The motivation is that, for two nodes $p$ and $q$, if edge $\overset{\longrightarrow}{pq}$ is an effective in $p$'s neighborhood, it is possible that $\overset{\longrightarrow}{qp}$ is also effective for node $q$. We will try to insert $\overset{\longrightarrow}{qp}$ into $q$'s neighbor set. If $\overset{\longrightarrow}{qp}$ conflicts with another out-edge of $q$, it means there is already an effective edge in this direction given $\alpha$. 

Further, we adopt the technique in \cite{fu2019fast} to ensure the connectivity of the graph. Specifically, we select a few navigating nodes (randomly) and ensure the connectivity from these nodes to all the others. Different from the standard SSG, NSSG is not guaranteed to be strongly connected. Meanwhile, it is time consuming to turn a general graph into a strongly connected one. Instead, we randomly select a few navigating nodes from the dataset, treat them as roots, and span DFS trees from them. Please refer to paper \cite{fu2019fast} for more details. 

Please refer to Alg. \ref{SSG_build_alg1} for details on NSSG indexing. The search algorithm on the NSSG is the same as Alg. \ref{search_alg}, except that searching on an NSSG starts from the nearest navigating node to the query. In this way, we can get more stable search performance.

\subsection{Analysis}
\smallskip
\noindent
\textbf{Complexity.} The indexing of NSSG contains mainly two stages: 1) K-NN graph construction and 2) edge processing. The K-NN graph construction is not in the scope of this paper. Given a K-NN graph, the edge processing, including candidate edge generation, edge selection and reverse neighbor checking, can be done within linear time according to Alg. \ref{SSG_build_alg1}. The edge generation is simply two rounds of "neighbor-fetch" on the pre-built approximate K-NN graph. The complexity is O($nk^2$), where $k (\ll n)$ is the degree of the K-NN graph. The edge selection includes $n(k^2+rk^2)$ times of distance calculation and a sorting on a set of $k^2$ node, where $r (\ll n)$ is the max-degree constraint of the NSSG. The complexity is O$(dn(k^2+rk^2) + k^2\log k^2)$. In the worst case, i.e., the reverse neighbor checking replaces all the nodes in the NSSG, the complexity of reverse neighbor checking is the same as the edge selection. In summary, the total complexity of the second stage, edge processing, is O$(dn(k^2+rk^2)+nk^2 + k^2\log k^2)$. The DFS-spanning can also be done in linear time \cite{fu2019fast}, which is O$(nrv)$, where $v$ is the number of navigating nodes. Consequently, the edge processing stage has linear complexity regarding the dataset size $n$. However, the K-NN graph construction usually has a complexity much higher than linear time \cite{Dong2011Efficient}, which dominates the indexing of NSSG. 

In our later empirical evaluation, the search complexity on the NSSG is about O($n^{\frac{1}{d}} \log n$), close to the theoretical complexity of the SSG, which indicates that our NSSG indexing algorithm can produce a graph best approximating the properties of SSG.

\smallskip
\noindent
\textbf{Worst case.} Given enough iterations, the answer is guaranteed to be found because the connectivity from the navigating node to all the other nodes is guaranteed. The worst case is checking all the nodes in the graph, which can hardly happen. In that case, all the nodes should almost form a straight line in the space, and the navigating node is on one end of the line while the query lies on the other end. Such extreme case will hardly happen in practical problems.

\section{Experiments and Analysis}
\label{experiment}
In this section, we will present the experimental results and analysis to demonstrate the effectiveness of our method. 

\subsection{Datasets and Experimental Setting}
We use five different public real-world datasets to evaluate the performance of different algorithms, including four million-scale datasets, \eg, SIFT1M\footnote{http://corpus-texmex.irisa.fr/}, GIST1M\footnote{http://corpus-texmex.irisa.fr/}, Crawl\footnote{http://commoncrawl.org/}, and GloVe\footnote{https://nlp.stanford.edu/projects/glove/}, and one large-scale dataset, Deep100M, with 100 million points. The text (Crawl) is converted to vectors with fastText\footnote{https://fasttext.cc/ use mean pooling to get doc-embedding}. The Deep100M is sampled from the Deep1B dataset\footnote{https://github.com/facebookresearch/faiss/tree/master/ benchs\#getting-deep1b}. The datasets come from different media (\ie, texts or images) and different feature descriptors, thus their data distribution and the intrinsic dimensions are quite different. In this way, the results can reflects the generalizability of the algorithms on different data distribution. Please refer to Table \ref{dataset} for more details.

\smallskip
\noindent
\textbf{Machine Configuration.} All the are conducted on a machine with an Intel i9-7980XE CPU, 128 GB memory, 2 TB disk space and 4 NVIDIA 1080Ti GPUs. The indices are built with 36 threads in parallel. On the four smaller million-scale datasets, the search is evaluated in a single-threaded manner for fair comparison. On Deep100M, only NSG, NSSG, and Faiss are included in the evaluation because the other algorithms will trigger the out-of-memory (OOM) error on this machine. Only the Annoy algorithm uses the HDD-disk to store the index at runtime. All the other algorithms load their index into memory.

\subsection{Evaluation.} 
We evaluate the search performance of the compared algorithms with the criteria --- queries-per-second v.s. precision, \ie, we record how much queries an algorithm can process per second at given precision. The precision is formally defined as:
\begin{equation}
    precision(R) = \frac{|R \cap G|}{|G|}
\end{equation}
where $R$ is the answer set returned by the algorithm, and $G$ is the ground-truth set of the given query \cite{fu2019fast}.

Real applications are usually more sensitive to the search time in high-precision area. With different hyper-parameters, each algorithm will perform very differently in different precision area, we will present the best performance of all the algorithms in the high-precision region via parameter-tuning. In order not to lose generality, we divide each base set into training set, and validation set by 99:1. We build the indices of different algorithms on the training set and tune the parameters on the validation set to get the best-performing index of each algorithm at high-precision. 

We also evaluate the indexing process of graph-based methods in terms of indexing time, memory usage, and the sparsity of the graphs.

\subsection{Compared Algorithms.} 
We select seven state-of-the-art algorithms from different types. We do not compare with the hashing-based methods because they are generally too slow at high-precision region (much slower than the serial-scan). 

We \textbf{do not include SSG} because 1) SSG's indexing is too slow to do parameter tuning ($\alpha$) at million scale; 2) the small-scale experiment in Sec. \ref{NSSGsec} has shown that SSG suffers from less contributive long edges and extremely high degree. It is foreseeable that SSG would perform much worse than NSSG in large-scale. The performance of SSG will be much worse Specifically, the selected algorithms are listed as follows:
\begin{table}[t]
\caption{Information of experimental datasets. We list the dimension (D), the local intrinsic dimension (LID), the number of base vectors, and the number of query vectors.}
\label{dataset}
\centering
\begin{tabular}{ccccc}
\hline
Dataset & D & LID & No. of base & No. of query\\
\hline
SIFT1M & 128 & 12.9 & 1,000,000 & 10,000 \\
GIST1M & 960 & 29.1 & 1,000,000 & 1,000 \\
Crawl  & 300 & 15.7 & 1,989,995 & 10,000 \\
GloVe  & 100 & 20.9 & 1,183,514 & 10,000 \\
Deep100M & 96 & 10.2 & 100,000,000  & 10,000 \\
\hline
\end{tabular}
\end{table}

\begin{figure*}[t]
	\centering
	\subfigure{\includegraphics[width=1\textwidth]{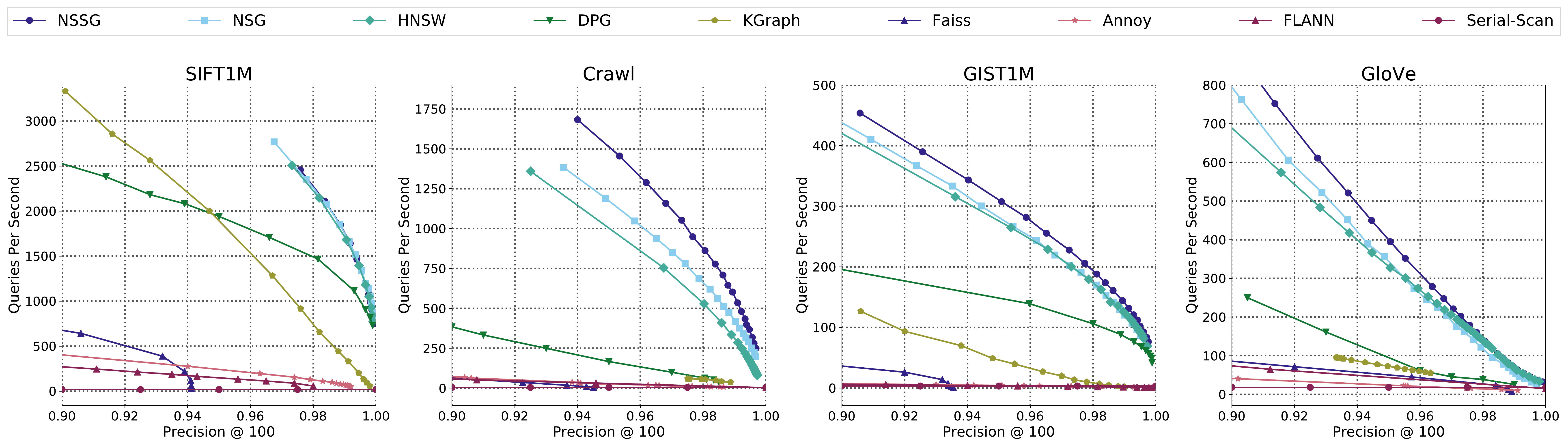}}
	\caption{\textbf{ANNS performance of graph-based algorithms (in log-scale) with their optimal indices in high-precision region on the four datasets} (top right is better). The x-axis is not meaningful for Serial-Scan because the results are accurate. }
	\label{graph_search}
\end{figure*}

\begin{table}[t]
\caption{Information of the graph-based indices involved in all of our experiments. The Size means the memory occupation of the index. AOD means the Average Out-Degree. MOD means the Maximum Out-Degree. Because HNSW contains multiple graphs, we only report the AOD and MOD of its bottom-layer graph here.}
\label{best_index_tb}
\centering
\begin{tabular}{ccccc}
    \hline
    Dataset & Algorithms & Size (MB) &  AOD & MOD  \\
    \hline
    
    \multirow{5}{*}{SIFT1M} 
    & NSSG & \textbf{153} &  39 & 50  \\
    & NSG & \textbf{153} & 25.9 & 50  \\
    & HNSW & 451 & 32.1 & 50\\
    & KGraph & 374  & 200 & 200 \\
    & DPG & 632  & 165.1 & 1260 \\
    \hline
    
    \multirow{5}{*}{GIST1M} 
    & NSSG & \textbf{267}  & 34 & 70  \\
    & NSG & \textbf{267}  & 26.3 & 70  \\
    & HNSW & 667  & 23.9 & 70 \\
    & KGraph & 1526 & 400 & 400 \\
    & DPG & 741  & 194.3 & 20899 \\
    \hline
    
    \multirow{5}{*}{Crawl} 
    & NSSG & \textbf{303}  & 22 & 40  \\
    & NSG & \textbf{303}  & 11 & 40  \\
    & HNSW & 759 &  12.1 & 40 \\
    & KGraph & 3036  & 400 & 400 \\
    & DPG & 1465  & 193 & 97189 \\
    \hline
    
    \multirow{5}{*}{GloVe} 
    & NSSG & \textbf{225}  & 21 & 50  \\
    & NSG & \textbf{225}  & 13 & 50  \\
    & HNSW & 564  & 12 & 50 \\
    & KGraph & 1805  & 400 & 400 \\
    & DPG & 787  & 174 & 50336 \\
    \hline
\end{tabular}
\end{table}

\begin{enumerate}
    \item \textbf{FLANN\footnote{https://github.com/mariusmuja/flann}} is a well-known ANNS library based on many tree-based algorithms, including randomized KD-trees, Kmeans trees, and so on. We use its auto-tune composite tree algorithm for comparison. 
    \item \textbf{Annoy\footnote{https://github.com/spotify/annoy}} is a K-means tree algorithm with $K=2$. The algorithm is specially optimized for $K=2$.
    \item \textbf{Faiss\footnote{https://github.com/facebookresearch/faiss}} is a quantization-based algorithm recently released by Facebook. We use its IVF-PQ implementation for comparison. Specifically, the index contains two parts: the inverted file (IVF) and the product quantization code and the codebook. The search on the IVFPQ is a two-stage process: use the IVF to locate a small number of candidate answers and then use the quantized distance to rank them.
    \item \textbf{KGraph\footnote{https://github.com/aaalgo/kgraph}} is a graph-based method which use a KNN graph as the index. It is a well-implemented version of the GNNS \cite{Hajebi2011Fast} algorithm. It also contains a well-implemented $nn$-descent algorithm \cite{Dong2011Efficient} to build the KNN graph. 
    \item \textbf{DPG\footnote{https://github.com/DBWangGroupUNSW/nns\_benchmark}} is a graph-based method which also uses the angles between the edges to select edges \cite{li2016approximate}. The differences between the DPG and the SSG are discussed in Sec. \ref{related_work}.
    \item \textbf{HNSW\footnote{https://github.com/searchivarius/nmslib}} is a well-known graph-based algorithm based on Hierarchical NSW Graph \cite{MalkovYHNSW16}.
    \item \textbf{NSG\footnote{https://github.com/ZJULearning/nsg}} is the state-of-the-art graph-based algorithm which is the approximation of the MRNG \cite{fu2019fast}.
    \item \textbf{Serial-Scan}. We perform serial-scan on the test sets to show the speed-up of different algorithms. For clearness, we draw a horizontal line for Serial-Scan instead of a point to show its positions in the figures.
    \item \textbf{NSSG} uses Alg. \ref{SSG_build_alg1} to build the NSSG. The KNN graph construction algorithm we use here is the $nn$-descent algorithm \cite{Dong2011Efficient}.
\end{enumerate}

\section{Results and Analysis}

\subsection{Search Performance.}

The search performance on the four medium-scale datasets are shown in Fig. \ref{graph_search}. We can see that:
\begin{enumerate}
    \item Our approach outperforms the others significantly. Especially on datasets with higher intrinsic dimension, the gap between our method and the others becomes wider and wider. On Crawl, GloVe, and GIST1M, NSSG achieves 35\%, 15\%, and 20\% speed-up over NSG respectively at 95\% precision. This indicates our method is insensitive to the increase of the dimension compared with the others.
    
    \item Graph-based methods are much better than non-graph-based ones on these datasets in high-precision, which is consistent with the results in the experimental report of \cite{Hajebi2011Fast, Ben2016Fanng, fu2019fast, MalkovYHNSW16}.
    
    \item NSSG is neither the most sparse nor the most dense graph among all graph indices, but NSSG performs the best. This agrees with our hypothesis that there exists an optimal degree. With different $\alpha$, we can easily adjust NSSG's sparsity and meanwhile preserve the excellent theoretical properties. In contrast, NSG and HNSW are too sparse, thus their performance is inferior.
    
    \item DPG performs much worse than the NSSG. The main reason is that DPG is too dense (Table \ref{best_index_tb}) and no theory supports DPG's edge selection strategy. 
    
    \item On SIFT1M, NSSG, NSG, and HNSW show almost the same search performance. This is because the SIFT1M is of the lowest intrinsic degree among all the four datasets, in other words, the SIFT1M is the simplest dataset. NSSG, NSG, and HNSW achieve the performance up-limit of graph-based methods.
    
    \item The curves of NSSG, NSG, and HNSW have relatively high-accuracy starting points, because their worst search parameters produce rather high accuracy. In terms of usability, HNSW is easier to use because it does not need to build a KNN graph.
    
    \item Faiss has an obvious accuracy up-limit on all datasets. This is because the quantization error limits its accuracy upper-bound. No matter how we tune the parameters, we cannot get a better result. With higher dimension, the quantization error grows quickly. NSSG does not have such upper-bound because we search based on accurate distance calculation rather than quantized distance.

\end{enumerate}

\begin{figure*}[t]
\begin{center}
\includegraphics[width=1\textwidth]{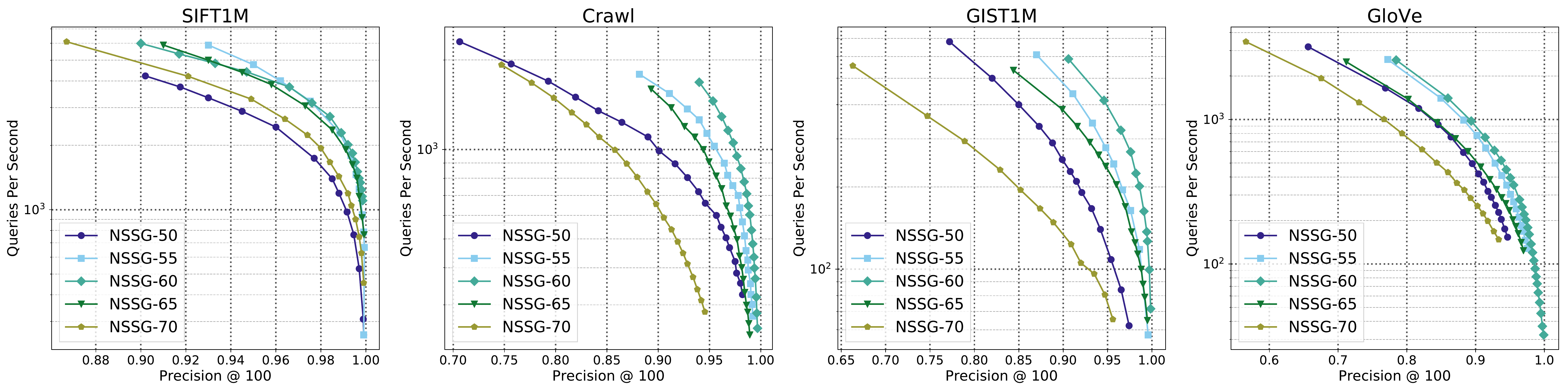}
\end{center}
   \caption{\textbf{The results of using different angles on GIST1M datasets.} On SIFT1M,  $55^{\circ}$ is the best performing angle, but in general, the performance with $60^{\circ}$ is  better than the other angles.}
\label{angle}
\end{figure*}

\begin{table}[t]

	\caption{The indexing time of all the graph-based methods. The indexing times of NSSG and NSG are recorded in the form $t_1+t_2$, where $t_1$ is the time to build the approximate KNN graph, and $t_2$ is the time of edge-selection and connectivity strengthening. }
	\label{index_time}
	\centering
	\scalebox{0.85}{
	\begin{tabular}{cccccc}
		\hline
		Dataset & NSSG  & NSG & HNSW & KGraph & DPG\\
		\hline
		SIFT1M
		&\textbf{62+13 (s)}  & 62+45 (s) & 149 (s) & \textbf{62 (s)} &1120 (s)\\
	
	    GIST1M
		&\textbf{620+144 (s)} & 620+735 (s) & 1376 (s) &\textbf{620 (s)}&6700 (s) \\
		
		Crawl
		&\textbf{790+82 (s)} & 790+567 (s) & 1083 (s) &\textbf{790 (s)} & 9169 (s) \\
		
		GloVe
		&\textbf{650+18 (s)} & 650+102 (s) & 930 (s) &\textbf{650 (s)} & 3139 (s) \\
		\hline
	\end{tabular}
	}
\end{table}

\subsection{Indexing Performance.} 
We lists the details of indices of various algorithms in Table \ref{best_index_tb} and \ref{index_time}. The index size of the graph is determined by the max-degree of the graph, because almost all the graph-based methods store their graphs as a $n\times m$ matrix, where $n$ is the number of nodes and $m$ is the max-degree. In this way, we can easily access the neighbors of each node within a continuous block of memory. This operation is not suitable for the DPG because its max-out-degree is too large. DPG uses a 2D array to store the graph. We can see that:
\begin{enumerate}
    \item NSSG and NSG's indices are the smallest among all the graph-based methods because their optimal indices have a small max-degree. Though the max-degree of the HNSW is the same with the NSSG and the NSG, the index size of the HNSW is large because it contains multiple layers of graphs.
    \item The approximate K-NN graph construction time dominates the total indexing time of NSSG, which agrees with our analysis.
    \item The edge selection ($t_2$) of NSSG is 3-7 times faster than NSG, because the time complexity of NSSG 's edge selection is much lower than that of NSG's.
    \item For fair comparison, NSSG uses the same K-NN graph construction algorithm as NSG does (the nn-descent algorithm, i.e., KGraph). This process can be accelerated tens of times by replacing nn-descent by other methods, such as the GPU-version of Faiss.
    \item The indexing time of non-graph-based methods are not listed here. It is necessary to mention that the indexing of graph-based methods is usually slower than non-graph-based ones.     
\end{enumerate}

\subsection{Parameters}
There are several parameters influencing the performance of NSSG: the size of the candidate-neighbor-pool $l$, the max-out-degree of the graph $r$, the number of navigating nodes $m$, the minimal angle between edges $\alpha$.

We test multiple different angles on different datasets (Fig. \ref{angle}). On SIFT1M, NSSG-55 is the best but the gap between NSSG-60 and NSSG-55 is not obvious. On the rest dataset, NSSG-60 is the best. Thus, we recommend $\alpha=60^{\circ}$ for new users. NSSGs with smaller $\alpha$ performs worse because the graph becomes more dense. The search paths are shortened, but the degree of the graph increases too much. NSSGs with larger $\alpha$ performs worse because it no longer preserves the MSNET property and is too sparse. 

For the remaining indexing parameters of all the compared algorithms, we perform grid-search on the validation set to obtain the optimal values of important parameters of different algorithms on different datasets as follows:

    1) \textbf{SIFT1M.} We use $l=100,r=50,\alpha=60^{\circ}, m=10$ for NSSG, $l=40,r=50$ for NSG, $M=25,efconstruction=600$ for HNSW, $K=400,L=200,S=10,R=100,I=20$ for KGraph, $L_1=500, S=20, K=400, L_2=200$ for DPG, $nlists=4096$ for Faiss, $n_trees=400$ for Annoy and \textit{autotuned} algorithm for FLANN.
    
    2) \textbf{GIST1M.} We use $l=500,r=70,\alpha=60^{\circ}, m=10$ for NSSG, $l=60,r=70$ for NSG, $M=35,efconstruction=800$ for HNSW, $K=400,L=400,R=100,S=15,I=12$ for KGraph, $L_1=500, S=20, K=400, L_2=200$ for DPG, $nlists=4096$ for Faiss, $n_trees=400$ for Annoy and \textit{autotuned} algorithm for FLANN.
    
    3) \textbf{Crawl.} We use $l=500,r=40,\alpha=60^{\circ}, m=10$ for NSSG, $l=500,r=40$ for NSG, $M=20,efconstruction=1000$ for HNSW, $K=400,L=420,S=15,R=100,I=12$ for KGraph, $L_1=500, S=20, K=400, L_2=200$ for DPG, $nlists=4096$ for Faiss, $n_trees=400$ for Annoy and \textit{autotuned} algorithm for FLANN.
    
    4) \textbf{GloVe.} We use $l=500,r=50,\alpha=60^{\circ}, m=10$ for NSSG, $l=150,r=50$ for NSG, $M=25,efconstruction=2500$ for HNSW, $K=400,R=420,S=20,R=200,I=12$ for KGraph, $L_1=500, S=20, K=400, L_2=200$ for DPG, $nlists=4096$ for Faiss, $n_trees=500$ for Annoy and \textit{autotuned} algorithm for FLANN.

As for the meanings of the parameter symbols, please refer to the corresponding GitHub pages. 

\subsection{Complexity}
\begin{figure*}[t]
\begin{center}
\includegraphics[width=1\textwidth]{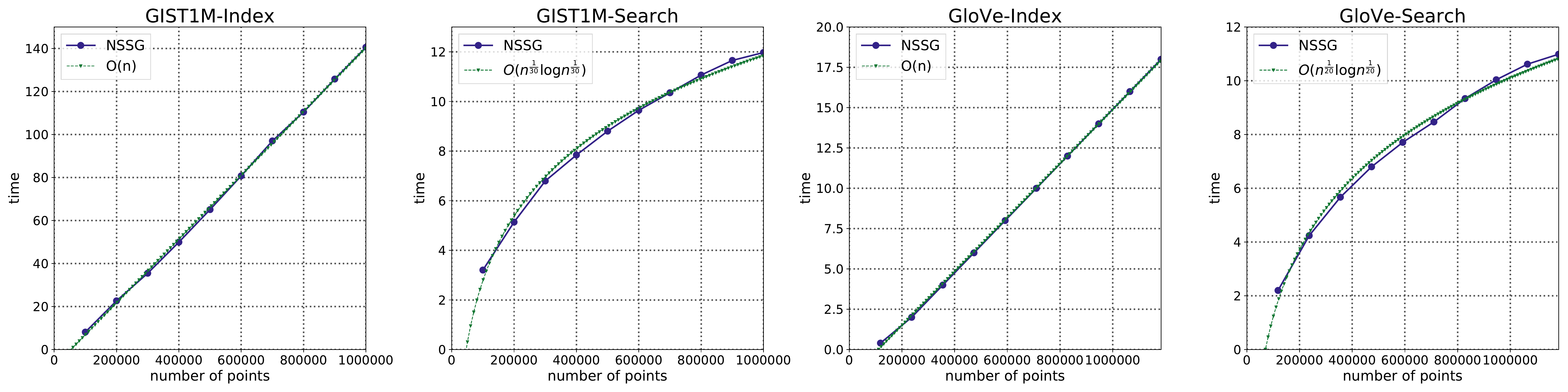}
\end{center}
   \caption{\textbf{The results on indexing complexity (edge selection) and searching complexity experiments on the GIST1M and GloVe dataset.} The solid lines are the complexity evaluation results. The dotted lines are the automatic fitting curves according to the formulation in the big O notation. Due to the randomness in the procedure, we evaluate the algorithm on different datasets multiple times to get the average.}
\label{complexity}
\end{figure*}

We split the medium-scale datasets into several subsets with different sizes, build the NSSG indices on these datasets, and record the searching and indexing performance statistics. To evaluate the indexing complexity, we use the same parameters to construct indices on all the subsets and record the time. It is important to note that the indexing time we report is only the time of the edge selection stage because K-NN graph construction is not in the scope of this paper.

To evaluate the search complexity, we conduct search with above indices on respective subset. Specifically, we adjust the search parameters until the result reaches 99\% precision. Due to the space limitation, we only show the search and indexing complexity experiments on the GIST1M and GloVe dataset in Fig. \ref{complexity}. The behaviors of the NSSG on other datasets follow similar patterns. 

In Fig. \ref{complexity}), the empirical estimation of the NSSG's edge selection complexity is O(n), while the estimation of the NSSG's search complexity is about $O(n^{\frac{1}{d}}\log n^{\frac{1}{d}})$, where $d$ approximately equals to the intrinsic dimension of the respective dataset. $O(n^{\frac{1}{d}}\log n^{\frac{1}{d}})$ is very close to O($\log n$) when $d$ is large. This agrees with our theoretical analysis.

We also analyze how much NSSG approximate SSG, please see the experiment result in the Appendix.

\subsection{Performance on Deep100M}
\label{deep1eExp}
\begin{figure}[t]
\begin{center}
\includegraphics[width=0.40\textwidth]{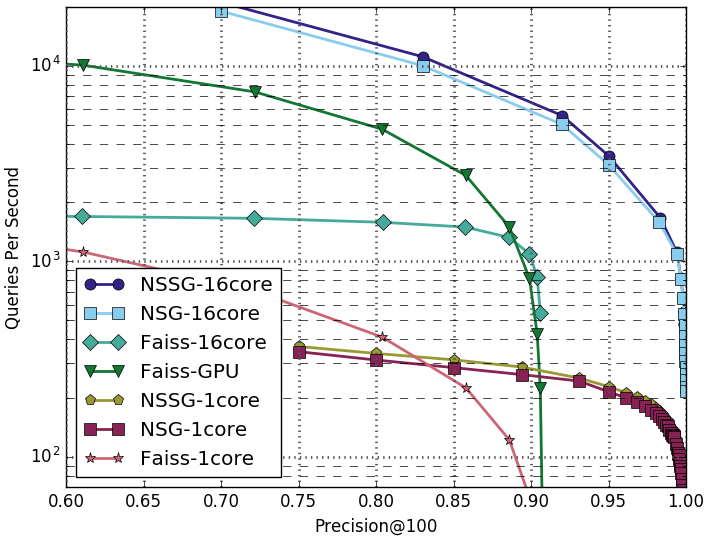}
\end{center}
   \caption{\textbf{The results on Deep100M in log-scale.} *-16core indicates we evaluate the search performance of the parallel-search version of the corresponding algorithms with 16 threads. *-1core indicates we evaluate the search performance of the sequential-search version of the corresponding algorithms with only one thread. We evaluate Faiss-GPU version on a NVIDIA 1080Ti GPU. We draw all these curves within one figure to show the performance difference clearly.}
\label{deep1e}
\end{figure}

We only evaluate the performance of the NSSG, NSG, and Faiss on the Deep100M datasets, because only these three algorithms can scale to such a big dataset. We evaluate these algorithms in two settings: \textbf{single-thread test and parallel-test}. Note that the parallel-test is not simply process different queries on different threads. It is meaningless regarding single-query-latency. Some algorithms can process one query in parallel, e.g., Faiss can do this by checking multiple inverted lists in parallel for one query. NSSG can do this via partitioning the dataset evenly into several chunks, building multiple sub-graphs subsequently, and searching them in parallel.

We use Faiss to build the KNN graph for NSSG on the Deep100M dataset, because $nn$-descent cannot scale to such a large dataset. Similar to the evaluation setting in \cite{fu2019fast}, we first build one NSSG index on the whole Deep100M to evaluate NSSG performance in the single-thread setting (NSSG-1core in Fig. \ref{deep1e}). Then Deep100M dataset is randomly divided into 16 subsets, 6.25M points in each. Then we build 16 NSSG graphs on these 16 subsets respectively. When we perform k-NN search for a query, we search all the 16 indices in parallel and aggregate the results to get the final answers. 

In our experiments, we evaluate the 100-NN performance for each algorithms in both single-thread test and parallel test settings. The results are shown in Fig. \ref{deep1e}. Suffix ``16core'' denotes the search performance of a given algorithm in parallel-test, while ``1core'' denotes the corresponding search performance in the single-thread test setting. Further, we also draw the Faiss search performance on a GPU as reference.

We can see that the NSSG outperforms the NSG and Faiss on both single-thread and parallel settings. NSSG achieves 5\% to 15\% times speed-up over the NSG under different accuracy, and hundreds times speed-up over Faiss.

In terms of indexing, NSSG achieves 30\% speed-up over NSG. Specifically, in single-thread setting, it takes 11 hours to build the NSSG-1core index (single-graph index), while 16 hours for the NSG-1core index. In parallel setting, NSG uses 3.53 hours to build the NSG-16core index, whereas we spend only 2.6 hours to build the NSSG-16core index (multi-graph index). This result again shows that the indexing complexity of the NSSG is much lower than the NSG's. 

In addition, Alg.~\ref{SSG_build_alg1} can be further optimized for large datasets. Specifically, we do not need to load a KNN graph into memory. We search for the K nearest neighbors for each node on-the-fly instead of saving the K-NN graph (please refer to the Alg. \ref{SSG_build_alg2} for more details of the implementation in this paper). Without the in-memory K-NN graph, we only use 56 GB memory at most, which is 37 GB less than the NSG, exactly the size of the K-NN graph. We can conclude that the NSSG is more scalable in large applications. 

\begin{algorithm}[t]
	\caption{NSSGIndexing($D$, $l$, $r$, $s$, $\alpha$)}
	\label{SSG_build_alg2}
	\begin{algorithmic}[1]
		\Require dataset $D$, candidate set size $l$, maximum out-degree $r$, number of navigating nodes $s$, minimum angle $\alpha$.
		\Ensure an NSSG $G$.
		\State Build an ANNS index $H$ with IVFPQ (Faiss) on $D$.
		\State $G=\emptyset$.
		\ForAll{node $i$ in $D$}
		\State $P= l$ nearest neighbors of node $i$ via search on $H$.
		\State Perform SSG pruning strategy on $P$.
		\State Update $G$ with selected edges. 
		\EndFor
		\ForAll{node $i$ in $G$}
		\ForAll{node $j$ in node $i$'s neighbors} 
		\State try adding node $i$ to $j$'s neighbors according
		\State to SSG's pruning criteria and avoid duplicates.
		\State remove longer edges if exceeding max-degree $r$.
		\EndFor
		\EndFor
		\State Randomly select $s$ nodes from the datasets as $NV$.
		\ForAll{node $i$ in $NV$}
		\State Strengthen the connectivity of the graph with 
		\State DFS-spanning from node $i$.
		\EndFor
		\State return $G$.
		
	\end{algorithmic}
\end{algorithm}

\section{Discussion and Limitations}

1) NSSG provides tens to hundreds times search speed-up over non-graph-based methods, and meanwhile the results are of very high accuracy. However, the indexing cost and memory usage of NSSG is usually higher than many non-graph-based methods. such as LSH and IVFPQ (Faiss). Therefore, NSSG is suitable for the scenarios which are insensitive to memory usage and data updating, but have high requirements for search latency and accuracy.

2) As shown in Sec. \ref{deep1eExp}, indexing a 100-million-point dataset on a single machine can be done within 3 hours, which is acceptable for daily-updating applications. Incremental indexing is feasible for NSSG, but it is not in the scope of this paper. We will leave it for future work.

3) In this paper, the biggest dataset we use includes 100 million points for a single machine, which is limited by the machine's physical capacity. To support bigger datasets with billions of points, the simplest approach is to use larger memory and better CPU. The other way is to implement the algorithm in an on-disk manner. Though we implement NSSG in an in-memory manner in this paper and GitHub, a simple on-disk alternative is to map the memory of the index and data to the disk with system calls interfaces. The search performance will decline due to the page fault latency. In our future work, we will explore more efficient on-disk algorithms.

\section{Conclusion}
In this paper, we propose a novel graph structure named as Satellite System Graph, which has superior theoretical guarantees for both indexed and unindexed queries. We can adjust the degree of SSG adaptively for different distribution. For large-scale applications, we propose the NSSG algorithm which has low indexing complexity and high search performance. Theoretical analysis and extensive experiments on several public datasets have demonstrated the strengths of our methods over the other SOTA approaches. Our code has been released on GitHub\footnote{https://github.com/ZJULearning/SSG}. 

\ifCLASSOPTIONcompsoc
  \section*{Acknowledgments}
\else
  \section*{Acknowledgment}
\fi

This work was supported in part by The National Key Research and Development Program of China (Grant Nos: 2018AAA0101400), in part by The National Nature Science Foundation of China (Grant Nos: 62036009, 61936006), in part by the Alibaba-Zhejiang University Joint Institute of Frontier Technologies.

\ifCLASSOPTIONcaptionsoff
  \newpage
\fi



%

\bibliographystyle{IEEEtran}
\bibliography{ssg}

\begin{thebibliography}{10}
\providecommand{\url}[1]{#1}
\csname url@samestyle\endcsname
\providecommand{\newblock}{\relax}
\providecommand{\bibinfo}[2]{#2}
\providecommand{\BIBentrySTDinterwordspacing}{\spaceskip=0pt\relax}
\providecommand{\BIBentryALTinterwordstretchfactor}{4}
\providecommand{\BIBentryALTinterwordspacing}{\spaceskip=\fontdimen2\font plus
\BIBentryALTinterwordstretchfactor\fontdimen3\font minus
  \fontdimen4\font\relax}
\providecommand{\BIBforeignlanguage}[2]{{%
\expandafter\ifx\csname l@#1\endcsname\relax
\typeout{** WARNING: IEEEtran.bst: No hyphenation pattern has been}%
\typeout{** loaded for the language `#1'. Using the pattern for}%
\typeout{** the default language instead.}%
\else
\language=\csname l@#1\endcsname
\fi
#2}}
\providecommand{\BIBdecl}{\relax}
\BIBdecl

\bibitem{BeisL97Shape}
J.~S. Beis and D.~G. Lowe, ``Shape indexing using approximate nearest-neighbour
  search in high-dimensional spaces,'' in \emph{1997 Conference on Computer
  Vision and Pattern Recognition}, 1997, pp. 1000--1006.

\bibitem{ferhatosmanoglu2001approximate}
H.~Ferhatosmanoglu, E.~Tuncel, D.~Agrawal, and A.~El~Abbadi, ``Approximate
  nearest neighbor searching in multimedia databases,'' in \emph{Data
  Engineering, 2001.}\hskip 1em plus 0.5em minus 0.4em\relax IEEE, 2001, pp.
  503--511.

\bibitem{chen2005robust}
L.~Chen, M.~T. {\"O}zsu, and V.~Oria, ``Robust and fast similarity search for
  moving object trajectories,'' in \emph{Proceedings of the 2005 ACM
  SIGMOD}.\hskip 1em plus 0.5em minus 0.4em\relax ACM, 2005, pp. 491--502.

\bibitem{philbin2007object}
J.~Philbin, O.~Chum, M.~Isard, J.~Sivic, and A.~Zisserman, ``Object retrieval
  with large vocabularies and fast spatial matching,'' in \emph{Computer Vision
  and Pattern Recognition, 2007. CVPR'07. IEEE Conference on}.\hskip 1em plus
  0.5em minus 0.4em\relax IEEE, 2007, pp. 1--8.

\bibitem{LiuRR07Clustering}
T.~Liu, C.~R. Rosenberg, and H.~A. Rowley, ``Clustering billions of images with
  large scale nearest neighbor search,'' in \emph{8th {IEEE} Workshop on
  Applications of Computer Vision}, 2007, p.~28.

\bibitem{zheng2016lazylsh}
Y.~Zheng, Q.~Guo, A.~K. Tung, and S.~Wu, ``Lazylsh: Approximate nearest
  neighbor search for multiple distance functions with a single index,''
  \emph{Proceedings of the 2016 International Conference on Management of
  Data}, pp. 2023--2037, 2016.

\bibitem{AroraSK018}
\BIBentryALTinterwordspacing
A.~Arora, S.~Sinha, P.~Kumar, and A.~Bhattacharya, ``Hd-index: Pushing the
  scalability-accuracy boundary for approximate knn search in high-dimensional
  spaces,'' \emph{{PVLDB}}, vol.~11, no.~8, pp. 906--919, 2018. [Online].
  Available: \url{http://www.vldb.org/pvldb/vol11/p906-arora.pdf}
\BIBentrySTDinterwordspacing

\bibitem{Bentley1975Multidimensional}
J.~L. Bentley, ``Multidimensional binary search trees used for associative
  searching,'' \emph{Communications of the {ACM}}, vol.~18, no.~9, pp.
  509--517, 1975.

\bibitem{Fukunaga1975A}
K.~Fukunaga and P.~M. Narendra, ``A branch and bound algorithm for computing
  k-nearest neighbors,'' \emph{IEEE Transactions on Computers}, vol. 100,
  no.~7, pp. 750--753, 1975.

\bibitem{Silpaanan2008Optimised}
C.~Silpa-Anan and R.~Hartley, ``Optimised kd-trees for fast image descriptor
  matching,'' in \emph{Proceedings of the 2008 IEEE Conference on Computer
  Vision and Pattern Recognition}, 2008, pp. 1--8.

\bibitem{Jagadish2005iDistance}
H.~V. Jagadish, B.~C. Ooi, K.~L. Tan, C.~Yu, and R.~Zhang, ``idistance: An
  adaptive b + -tree based indexing method for nearest neighbor search,''
  \emph{{ACM} Transactions on Database Systems}, vol.~30, no.~2, pp. 364--397,
  2005.

\bibitem{Fu2000Dynamic}
A.~W. Fu, P.~M. Chan, Y.~L. Cheung, and Y.~S. Moon, ``Dynamic vp-tree indexing
  for n-nearest neighbor search given pair-wise distances,'' \emph{{VLDB}
  Journal}, vol.~9, no.~2, pp. 154--173, 2000.

\bibitem{Gionis1999Similarity}
A.~Gionis, P.~Indyk, and R.~Motwani, ``Similarity search in high dimensions via
  hashing,'' in \emph{PVLDB}, 1999, pp. 518--529.

\bibitem{Weiss2008Spectral}
Y.~Weiss, A.~Torralba, and R.~Fergus, ``Spectral hashing,'' in \emph{Advances
  in Neural Information Processing Systems}, 2009, pp. 1753--1760.

\bibitem{Huang2015Query}
Q.~Huang, J.~Feng, Y.~Zhang, Q.~Fang, and W.~Ng, ``Query-aware
  locality-sensitive hashing for approximate nearest neighbor search,''
  \emph{PVLDB}, vol.~9, no.~1, pp. 1--12, 2015.

\bibitem{Liu2016Query}
X.~Liu, C.~Deng, B.~Lang, D.~Tao, and X.~Li, ``Query-adaptive reciprocal hash
  tables for nearest neighbor search.'' \emph{IEEE Transactions on Image
  Processing}, vol.~25, no.~2, pp. 907--919, 2016.

\bibitem{weber1998quantitative}
R.~Weber, H.-J. Schek, and S.~Blott, ``A quantitative analysis and performance
  study for similarity-search methods in high-dimensional spaces,'' in
  \emph{VLDB}, vol.~98, 1998, pp. 194--205.

\bibitem{jegou2011product}
M.~D. Jegou, Herve and C.~Schmid, ``Product quantization for nearest neighbor
  search,'' \emph{IEEE Transactions on Pattern Analysis and Machine
  Intelligence}, vol.~33, no.~1, pp. 117--128, 2011.

\bibitem{ge2014optimized}
T.~Ge, K.~He, Q.~Ke, and J.~Sun, ``Optimized product quantization,'' \emph{IEEE
  Transactions on Pattern Analysis and Machine Intelligence}, vol.~36, no.~4,
  pp. 744--755, 2014.

\bibitem{zhang2014composite}
T.~Zhang, C.~Du, and J.~Wang, ``Composite quantization for approximate nearest
  neighbor search.'' in \emph{ICML}, no.~2, 2014, pp. 838--846.

\bibitem{johnson2017billion}
J.~Johnson, M.~Douze, and H.~J{\'e}gou, ``Billion-scale similarity search with
  gpus,'' \emph{arXiv:1702.08734}, 2017.

\bibitem{arya1993approximate}
S.~Arya and D.~M. Mount, ``Approximate nearest neighbor queries in fixed
  dimensions,'' in \emph{SODA}, vol.~93, 1993, pp. 271--280.

\bibitem{Hajebi2011Fast}
K.~Hajebi, Y.~Abbasi-Yadkori, H.~Shahbazi, and H.~Zhang, ``Fast approximate
  nearest-neighbor search with k-nearest neighbor graph.'' in \emph{IJCAI
  2011}, vol.~22, 2011, pp. 1312--1317.

\bibitem{malkov2014approximate}
Y.~Malkov, A.~Ponomarenko, A.~Logvinov, and V.~Krylov, ``Approximate nearest
  neighbor algorithm based on navigable small world graphs,'' \emph{Information
  Systems}, vol.~45, pp. 61--68, 2014.

\bibitem{MalkovYHNSW16}
Y.~A. Malkov and D.~A. Yashunin, ``Efficient and robust approximate nearest
  neighbor search using hierarchical navigable small world graphs,'' \emph{IEEE
  Transactions on Pattern Analysis and Machine Intelligence}, 2018.

\bibitem{Ben2016Fanng}
H.~Ben and D.~Tom, ``{FANNG}: Fast approximate nearest neighbour graphs,'' in
  \emph{Proceedings of the 2016 IEEE Conference on Computer Vision and Pattern
  Recognition}, 2016, pp. 5713--5722.

\bibitem{fu2019fast}
C.~Fu, C.~Xiang, C.~Wang, and D.~Cai, ``Fast approximate nearest neighbor
  search with the navigating spreading-out graph,'' \emph{{PVLDB}}, vol.~12,
  no.~5, pp. 461--474, 2019.

\bibitem{aumuller2017ann}
M.~Aum{\"u}ller, E.~Bernhardsson, and A.~Faithfull, ``Ann-benchmarks: A
  benchmarking tool for approximate nearest neighbor algorithms,'' in
  \emph{International Conference on Similarity Search and Applications}.\hskip
  1em plus 0.5em minus 0.4em\relax Springer, 2017, pp. 34--49.

\bibitem{shimomura2020survey}
L.~C. Shimomura, R.~S. Oyamada, M.~R. Vieira, and D.~S. Kaster, ``A survey on
  graph-based methods for similarity searches in metric spaces,''
  \emph{Information Systems}, p. 101507, 2020.

\bibitem{dearholt1988monotonic}
D.~Dearholt, N.~Gonzales, and G.~Kurup, ``Monotonic search networks for
  computer vision databases,'' in \emph{Signals, Systems and Computers, 1988.},
  vol.~2.\hskip 1em plus 0.5em minus 0.4em\relax IEEE, 1988, pp. 548--553.

\bibitem{beckmann1990r}
N.~Beckmann, H.-P. Kriegel, R.~Schneider, and B.~Seeger, ``The {R}*-tree: an
  efficient and robust access method for points and rectangles,'' in
  \emph{{ACM} Sigmod Record}, vol.~19, no.~2.\hskip 1em plus 0.5em minus
  0.4em\relax Acm, 1990, pp. 322--331.

\bibitem{HuangFZFN15}
Q.~Huang, J.~Feng, Y.~Zhang, Q.~Fang, and W.~Ng, ``Query-aware
  locality-sensitive hashing for approximate nearest neighbor search,''
  \emph{{PVLDB}}, vol.~9, no.~1, pp. 1--12, 2015.

\bibitem{ge2013optimized}
T.~Ge, K.~He, Q.~Ke, and J.~Sun, ``Optimized product quantization for
  approximate nearest neighbor search,'' \emph{Proceedings of the IEEE
  Conference on Computer Vision and Pattern Recognition}, pp. 2946--2953, 2013.

\bibitem{Jin2014Fast}
Z.~Jin, D.~Zhang, Y.~Hu, S.~Lin, D.~Cai, and X.~He, ``Fast and accurate hashing
  via iterative nearest neighbors expansion,'' \emph{IEEE Transactions on
  Cybernetics}, vol.~44, no.~11, pp. 2167--2177, 2014.

\bibitem{CongEfanna2016}
C.~Fu and D.~Cai, ``Efanna : An extremely fast approximate nearest neighbor
  search algorithm based on knn graph,'' \emph{arXiv:1609.07228}, 2016.

\bibitem{MalkovPLK14}
Y.~Malkov, A.~Ponomarenko, A.~Logvinov, and V.~Krylov, ``Approximate nearest
  neighbor algorithm based on navigable small world graphs,'' \emph{Inf.
  Syst.}, vol.~45, pp. 61--68, 2014.

\bibitem{aurenhammer1991voronoi}
F.~Aurenhammer, ``Voronoi diagrams—a survey of a fundamental geometric data
  structure,'' \emph{ACM Computing Surveys (CSUR)}, vol.~23, no.~3, pp.
  345--405, 1991.

\bibitem{lee1980two}
D.-T. Lee and B.~J. Schachter, ``Two algorithms for constructing a delaunay
  triangulation,'' \emph{International Journal of Computer \& Information
  Sciences}, vol.~9, no.~3, pp. 219--242, 1980.

\bibitem{kleinberg2000navigation}
J.~M. Kleinberg, ``Navigation in a small world,'' \emph{Nature}, vol. 406, no.
  6798, pp. 845--845, 2000.

\bibitem{boguna2009navigability}
M.~Boguna, D.~Krioukov, and K.~C. Claffy, ``Navigability of complex networks,''
  \emph{Nature Physics}, vol.~5, no.~1, pp. 74--80, 2009.

\bibitem{jaromczyk1992relative}
J.~W. Jaromczyk and G.~T. Toussaint, ``Relative neighborhood graphs and their
  relatives,'' \emph{Proceedings of the IEEE}, vol.~80, no.~9, pp. 1502--1517,
  1992.

\bibitem{li2016approximate}
W.~Li, Y.~Zhang, Y.~Sun, W.~Wang, M.~Li, W.~Zhang, and X.~Lin, ``Approximate
  nearest neighbor search on high dimensional data-experiments, analyses, and
  improvement,'' \emph{IEEE Transactions on Knowledge and Data Engineering},
  2019.

\bibitem{Dong2011Efficient}
W.~Dong, C.~Moses, and K.~Li, ``Efficient k-nearest neighbor graph construction
  for generic similarity measures,'' in \emph{Proceedings of the 20th
  International Conference on World Wide Web}, 2011, pp. 577--586.

\bibitem{yao1982constructing}
A.~C.-C. Yao, ``On constructing minimum spanning trees in k-dimensional spaces
  and related problems,'' \emph{SIAM Journal on Computing}, vol.~11, no.~4, pp.
  721--736, 1982.

\end{thebibliography}




%
 \vspace{-0.8cm}
\begin{IEEEbiography}[{\includegraphics[width=1in,height=1.25in,clip,keepaspectratio]{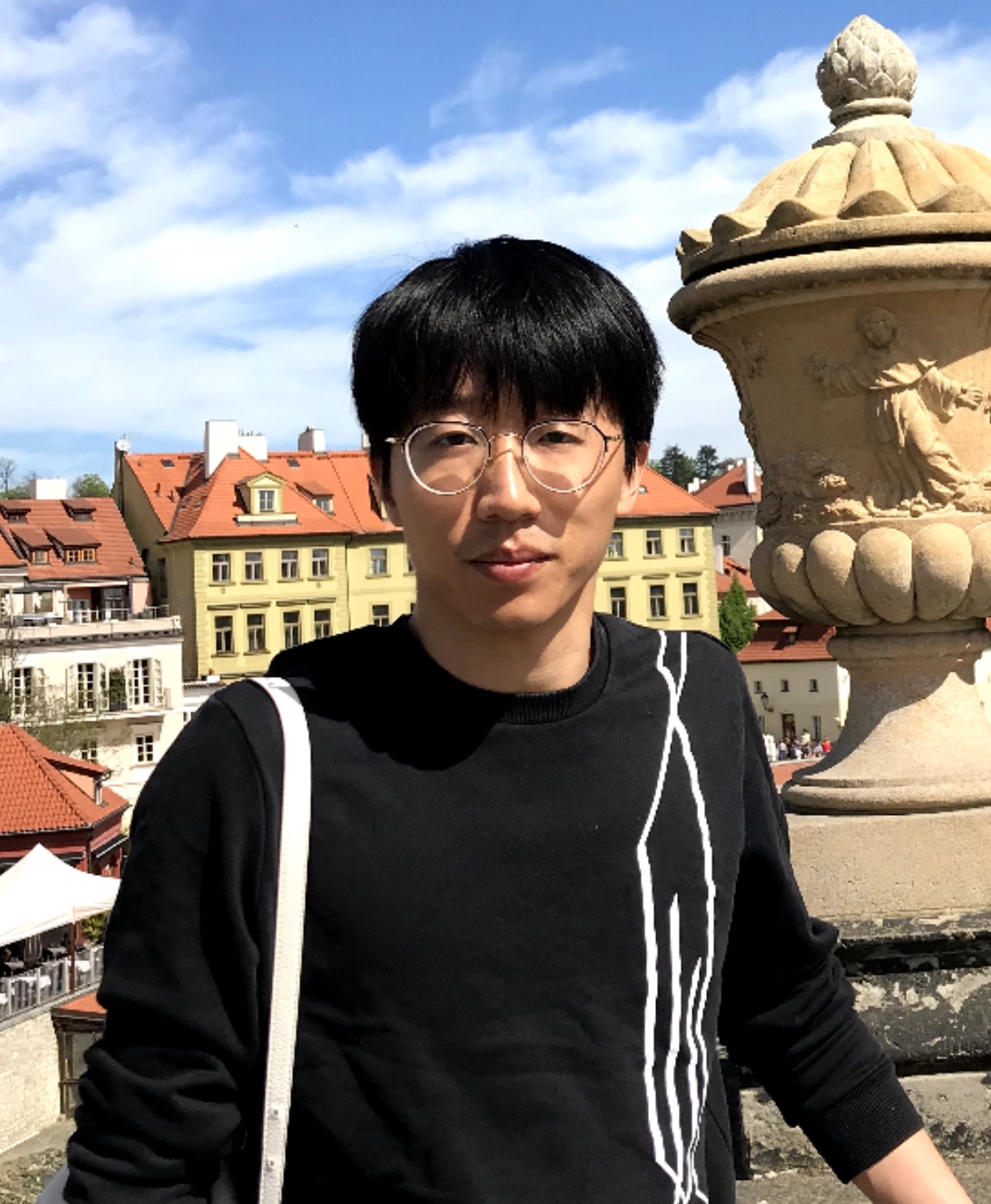}}]{Cong Fu}
 is a Ph.D. student in the State Key Lab of CAD\&CG, College of Computer Science at Zhejiang University, China. He received the bachelor degree in computer science from Zhejiang University in 2015. His research interests include recommendation system, knowledge graph, large-scale database, and information retrieval.
\end{IEEEbiography}
 \vspace{-0.8cm}
\begin{IEEEbiography}[{\includegraphics[width=1in,height=1.25in,clip,keepaspectratio]{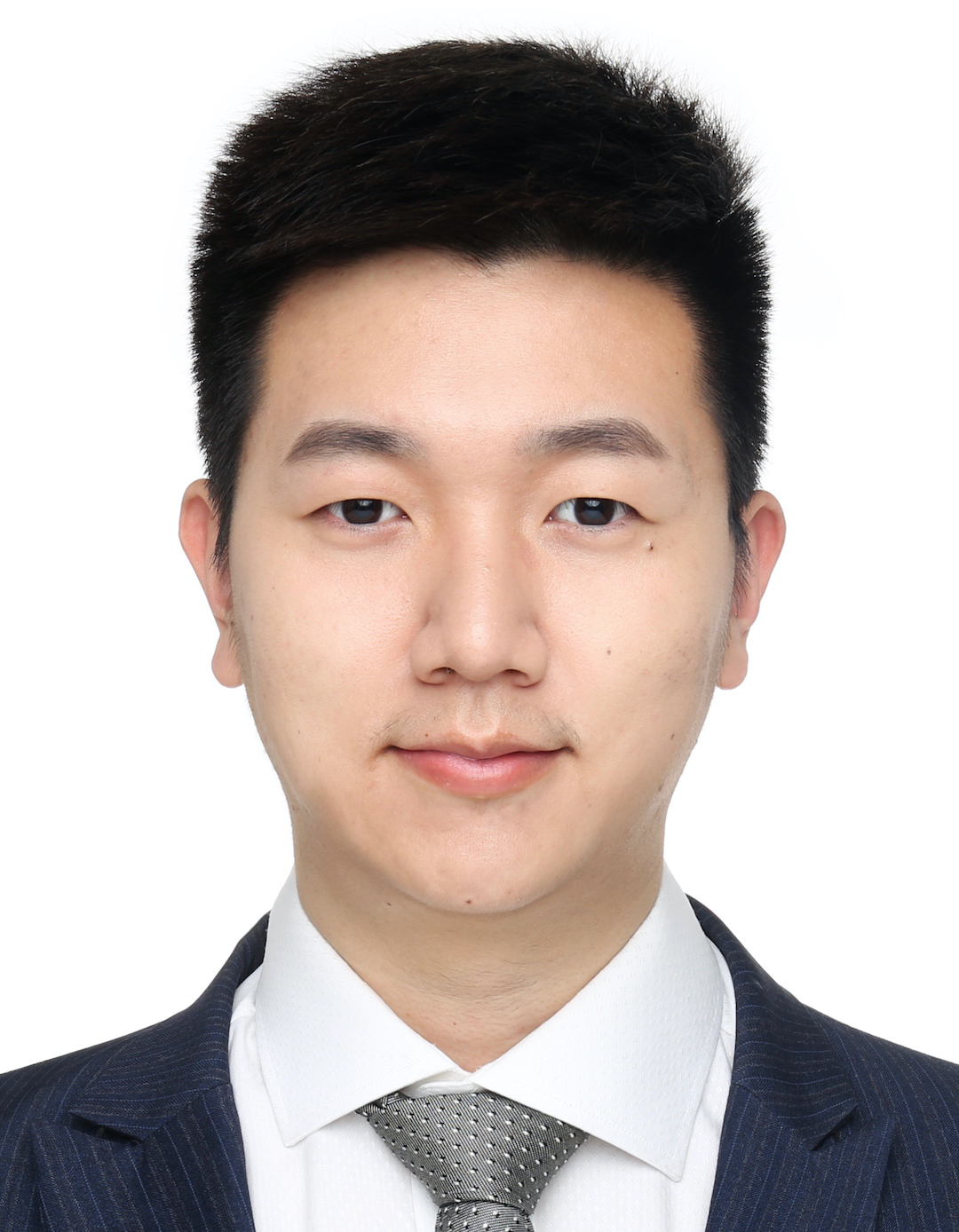}}]{Changxu Wang}
 works for Alibaba DAMO Academy for Discovery, Adventure, Momentum and Outlook. He received the master degree in computer science from Zhejiang University in 2019. His research interests include machine learning platform, computer vision, and information retrieval. 
\end{IEEEbiography}
 \vspace{-0.8cm}
\begin{IEEEbiography}[{\includegraphics[width=1in,height=1.25in,clip,keepaspectratio]{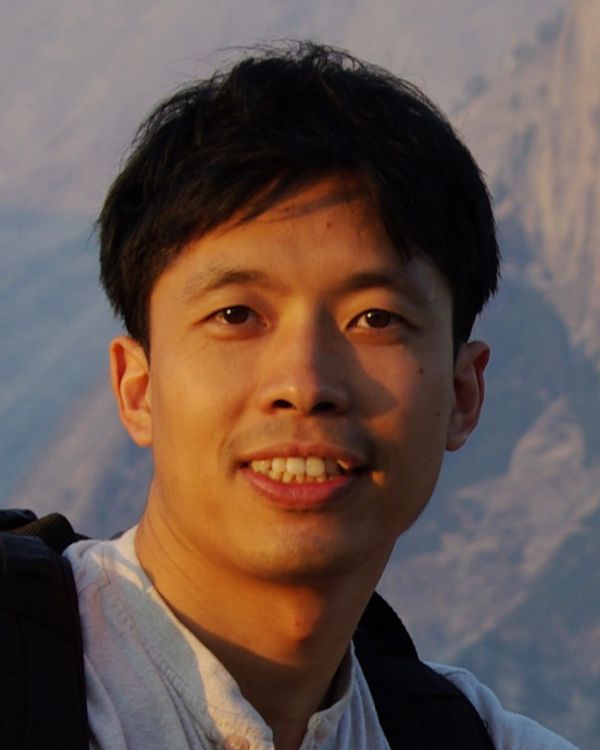}}]{Deng Cai}
 is a Professor in the State Key Lab of CAD\&CG, College of Computer Science at Zhejiang University, China. He received the Ph.D. degree in computer science from University of Illinois at Urbana Champaign in 2009. His research interests include machine learning, data mining and information retrieval. He is a member of IEEE.
\end{IEEEbiography}



\clearpage
\appendices
\section{Relationship Between NSG, RNG*, and This Work}
NSG \cite{fu2019fast}, RNG* \cite{arya1993approximate} is closely related to this work. Considering that they are not well-known methods, and we try to make the main body of this paper focus more on SSG, we will briefly introduce them here and discuss the relationships between them and this work, especially the differences.

\subsection{A Brief Introduction to NSG paper}
Fu~\etal \cite{fu2019fast} make efforts to find a new graph index, specifically they hope the out-degree of the graph to be as low as possible, and meanwhile the path connecting any two nodes in the graph can be as short as possible. Different from previous methods which are mostly based on heuristics, Fu~\etal~\cite{fu2019fast} try to find a basic theoretical framework and develop a new index from it. In particular, they first find out an early graph model MSNET \cite{dearholt1988monotonic} and provide a few theoretical analysis of MSNET. MSNET defines a group of graphs. Fu~\etal~\cite{fu2019fast} propose MRNG and prove that MRNG is a kind of MSNET. To lower the indexing complexity, they propose an approximation model, namely NSG. Empirical results show that NSG is very sparse and meanwhile efficient in search. Consequently, NSG uses the least memory compared with other graph-based methods and achieves the best search performance.

\subsection{Differences Between NSG and This Work}
It is not difficult to find out that the paper structure of this work is similar to that of NSG \cite{fu2019fast}: proposing a theoretical model and develop an approximation for practical use. 

However, this work and NSG are two completely different methods. The MSNET \cite{dearholt1988monotonic} is proposed by Dearholt~\etal about 30 years ago, which is the basic theoretical framework of both NSG and this work. The only common point of NSG and this work is that MRNG proposed in their work and SSG in this work both belong to the MSNET family. The differences between NSG and this work are as follows:

\begin{enumerate}
\item MRNG and SSG are MSNETs, but their ideas are completely different. Fig. \ref{rule_mrng_ssg} is an illustration of the different edge selection strategy. MRNG tries to prune as much long edges as possible as long as it does not affect the connectivity of the graph. Therefore, in a) of Fig. \ref{rule_mrng_ssg}, $ac$ cannot be an out-edge of MRNG, because $ac$ is the longest edge in triangle $abc$ and $ab$ has been selected as an out-edge in the MRNG. In contrast, SSG cares more about the out-direction coverage. Thus, SSG uses the angle between edges as the edge selection criteria. 

\item The theoretical properties of MRNG and SSG are different. MRNG only has theoretical guarantees for in-database queries, but SSG has theoretical guarantees for both in-database queries and not-in-database queries. 

\item Under the general position assumption (there is no isosceles triangle formed by any node combination, commonly used in graph theory), an MRNG defined on a given dataset is unique. But there are a lot of SSGs which can be built on a given dataset, because each $\alpha, (0\le \alpha \le 60^{\circ})$ defines an SSG. To some extent, the sparsity of SSGs can be easily  and precisely manipulated.

\item The indexing pipeline of NSG (MRNG approximation) and NSSG (SSG approximation) are completely different. Simply speaking, building NSG follows a search-collect-prune pipeline; building NSSG follows a propagation, pruning, and reverse-neighbor linking pipeline. Thereby, in NSG indexing, edge processing dominates the indexing complexity (in O($n\log n$) scale), but KNN graph construction dominates the NSSG indexing (edge processing is of O($n$) complexity). 
\end{enumerate}
\begin{figure}[t]
\begin{center}
\includegraphics[width=0.48\textwidth]{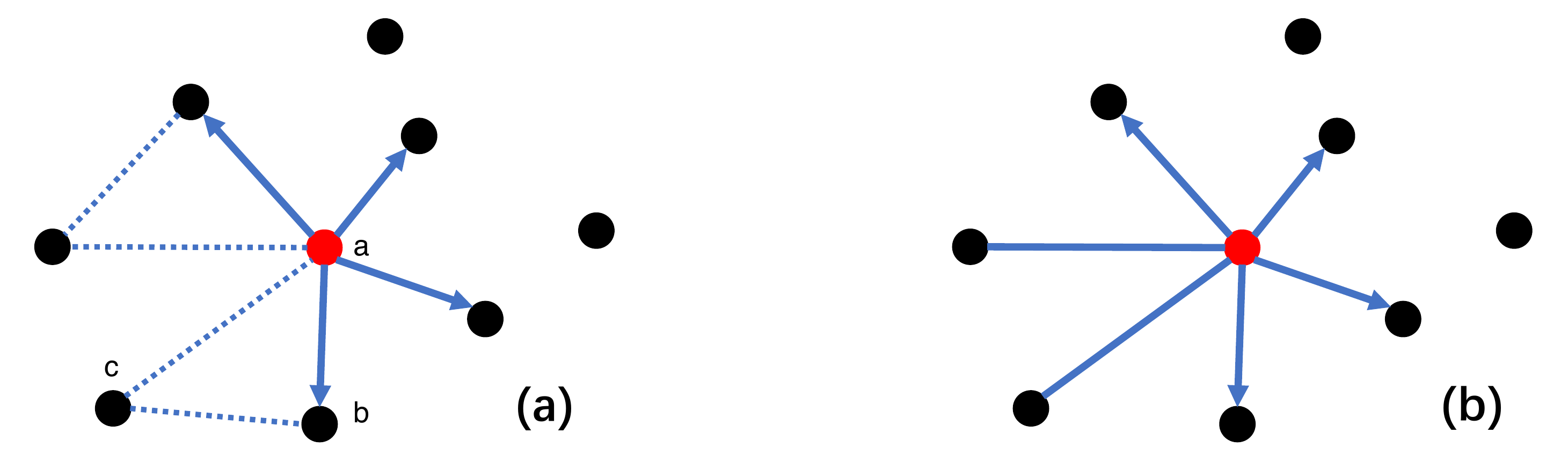}
\end{center}
   \caption{\textbf{Edge distribution difference between MRNG and SSG.} a) is a typical MRNG. b) is an SSG. }
\label{rule_mrng_ssg}
\end{figure}
In summary, SSG has better theoretical guarantees than MRNG; The low indexing complexity ensures NSSG higher scalability than NSG; The flexible sparsity manipulation ensures SSG a better adaptability on all kinds of data distribution. 

\subsection{A Brief Introduction to RNG* paper}
Arya~\etal\cite{arya1993approximate} propose a new graph-based ANNS method named as Randomized Neighborhood Graph (RNG*, distinguish from Relative Neighborhood Graph). RNG* can be viewed as a graph pruned from a complete graph. For a given point $p$ in the graph, its neighbors are selected with the following routine. The space around $p$ is divided into adjacent cones. Then the points are grouped by cones. The points in each cone are given unique numbers as IDs. The points are sorted in ascending order of the IDs. Next the points are checked from the beginning one by one. If a point is the closest to $p$ among the points of lower indices in this cone, it will be selected. This produces a graph with a O($\log^3 n$) search complexity. 

Though theoretically attractive, RNG* suffers from high indexing cost, high memory cost, and high search cost in high dimensions. Therefore they propose a variant named as RNG*(S). RNG*(S) can be also viewed as a graph pruned from a complete graph. For a given point $p$, its neighbors are selected with the following routine. The rest points are sorted in ascending order of the distance to $p$. The nearest one, some point $r$, is first selected. If a point $s$ satisfies that $\delta(p,s) < \delta(s,r)$, $s$ is selected. We get RNG*(S) when all points are pruned in this way.

\subsection{Differences Between RNG* and This Work}
Intuitively, in terms of pruning strategy, MRNG is similar to RNG*(S), and SSG is similar to RNG*. The differences are as follows.

\begin{enumerate}
\item During edge pruning, SSG is generated in a deterministic way, but RNG* involves randomness. SSG selects only one neighbor in each cone for a given point, while RNG* select O($\log n$) neighbors in each cone. Consequently, SSGs are much sparser than RNG*. SSG has a clear discussion on the hyper-parameter $\alpha$, while RNG* does not tell how to select a proper angle diameter for each cone.

\item SSG considers both angle and distance in pruning, while RNG*(S) only considers distances. MRNG and RNG*(S) are the same things.
\end{enumerate}

\section{Theoretical Analysis}

\subsection{Proof For Theorem \ref{theorem1}}

\begin{proof}
According to the Lemma 1 in the NSG paper \cite{fu2019fast}, in an Euclidean space $E^d$, a graph $G$ defined on a finite point set $S$ is monotonic if and only if for any two unconnected points $p,q \in S$, there is at least one edge $\overset{\longrightarrow}{pr}$ such that $r \in B(q, \delta(p,q))$. $B(q, \delta(p,q))$ is defined as the open sphere centered at $q$ with radius $\delta(p,q)$. 

For any two node $p,q \in S$, there are only two kinds of relationships between them according to the definition of SSG.

1) $p,q$ is connected. If $p,q$ is connected, the path between $p,q$ contains only one edge, which is monotonic. 

2) $p,q$ is not connected. If $p,q$ is not connected, there must be at least one edge $\overset{\longrightarrow}{pr} \in SSG$ such that $\overset{\longrightarrow}{pr} \in Cone(\overset{\longrightarrow}{pq}, \alpha)$ and $\delta(p,r) < \delta(p,q)$. Because $\angle{rpq} < \alpha \le 60^{\circ}$, $rq$ is not the longest edge in $\triangle{pqr}$, and we have $\delta(r,q) < \delta(p,q)$. Thus it is monotonic towards $q$ when going from $p$ to $r$. This process can be repeated iteratively until we have a monotonic search path from $p$ to $q$. Finally, we have SSG is monotonic. 

Since SSG is monotonic, according to the definition of Monotonic Search Network \cite{fu2019fast}, SSG is strongly connected.
\end{proof}
\begin{figure}[!ht]
\begin{center}
\includegraphics[width=220pt]{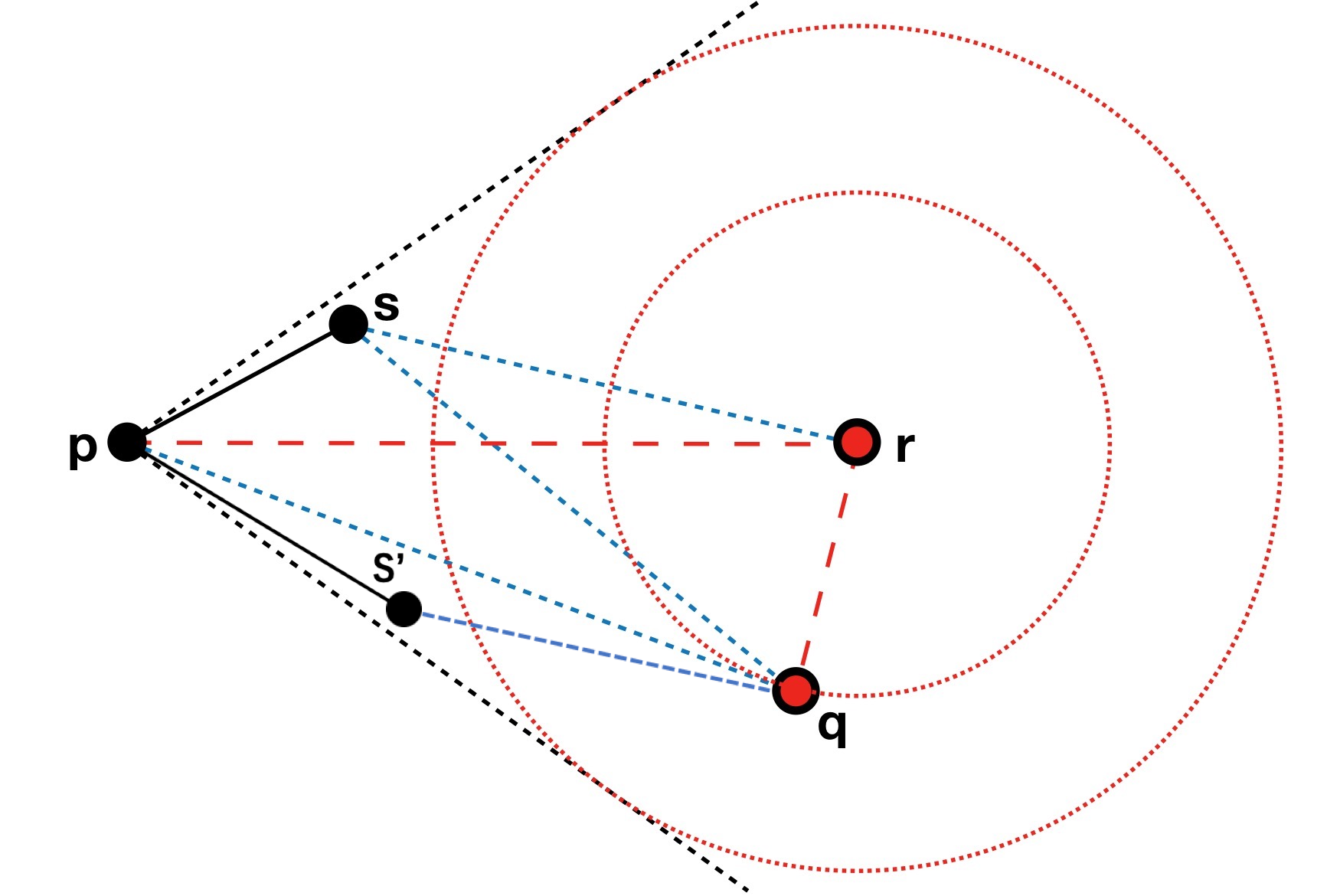}
\end{center}
   \caption{An illustration of the proof for the partially monotone path towards the query which is not in the database. Note that $p,q,r,s$ may not be in the same hyper-plane.}
\label{comonotone}
\end{figure}
\subsection{Proof For Theorem \ref{theorem2}}

First we need to prove that the upper-bound of SSG's degree is a constant. 

\begin{lemma}
\label{lemma_degree}
In an Euclidean space $E^d$, for any $0 < \alpha < \pi$, there are finite edges sharing a common end node such that the angle between any two of them is $\beta$ and $\beta \ge \alpha$. The supremum of the number of such edges is a function of $\alpha$ and $d$.
\end{lemma}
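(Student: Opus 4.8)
The plan is to reduce the statement to a packing problem on the unit sphere. First I would fix the shared end node, say $p$, and associate to each edge $\overset{\longrightarrow}{pv}$ only its \emph{direction}, the unit vector $u = (v-p)/\delta(p,v)$, which is a point of the unit sphere $\mathbb{S}^{d-1} \subset E^d$. The lengths of the edges are irrelevant to the hypothesis, so this loses nothing. The key observation is that the angle between two edges equals the geodesic (angular) distance between the corresponding points of $\mathbb{S}^{d-1}$. Hence the assumption that every pairwise angle $\beta$ satisfies $\beta \ge \alpha$ is exactly the statement that the associated direction points are pairwise separated by geodesic distance at least $\alpha$.

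Next I would place around each direction point a spherical cap of angular radius $\alpha/2$. Using the triangle inequality for the geodesic metric on $\mathbb{S}^{d-1}$, if some point lay in two such (open) caps, its two centers would be within geodesic distance $\tfrac{\alpha}{2}+\tfrac{\alpha}{2}=\alpha$ of each other, contradicting the separation $\ge \alpha$; therefore the caps are pairwise disjoint. Comparing the total surface measure of these disjoint caps with the surface measure of the whole sphere yields the uniform bound
\[
\#\{\text{edges}\} \;\le\; \frac{\mathrm{area}(\mathbb{S}^{d-1})}{\mathrm{area}\bigl(\mathrm{cap}(\alpha/2)\bigr)}.
\]
Since $0<\alpha<\pi$ gives $0<\alpha/2<\pi/2$, the cap has strictly positive measure, so the right-hand side is a finite quantity depending only on $\alpha$ and $d$. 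This already proves the finiteness assertion, uniformly over all admissible configurations.

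For the supremum claim I would argue as follows. The bound above holds simultaneously for \emph{every} configuration of $\alpha$-separated edges, so the set of achievable edge counts is a nonempty set of nonnegative integers that is bounded above. Such a set has a maximum, which is finite and uniquely determined by the pair $(\alpha,d)$. Consequently the supremum of the number of such edges is a well-defined function of $\alpha$ and $d$, as claimed.

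The only delicate point I anticipate is the geometric bookkeeping: one must verify that the angle between the edges coincides with the \emph{geodesic} distance of the direction vectors (not their chordal distance) and that the cap disjointness is strict. The argument is insensitive to the exact value of $\mathrm{area}(\mathrm{cap}(\alpha/2))$—any positive lower bound suffices—so I would invoke its positivity rather than compute it. An alternative route to finiteness alone is a compactness argument: an infinite family of $\alpha$-separated unit vectors would, by compactness of $\mathbb{S}^{d-1}$, admit a convergent subsequence whose tail points are mutually closer than $\alpha$, a contradiction. However, compactness on its own does not furnish the \emph{uniform} bound needed to conclude that the supremum is finite and depends only on $(\alpha,d)$, which is why I prefer the packing estimate.
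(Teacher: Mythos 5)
Your proof is correct, but it follows a genuinely different route from the paper's. The paper invokes Yao's construction (covering $E^d$ by finitely many convex cones of angular diameter at most $\varphi$, with $\varphi = \tfrac{1}{2}\alpha - \varepsilon$) and then argues by pigeonhole: any two edges falling in the same cone, or in two adjacent cones, would form an angle strictly less than $\alpha$, so the number of $\alpha$-separated edges is bounded by the (finite) number of cones, which depends only on $\alpha$ and $d$. You instead identify each edge with its unit direction vector on $\mathbb{S}^{d-1}$, observe that pairwise angle $\ge \alpha$ means pairwise geodesic separation $\ge \alpha$, and run a standard packing argument with disjoint open caps of angular radius $\alpha/2$, bounding the count by $\mathrm{area}(\mathbb{S}^{d-1})/\mathrm{area}(\mathrm{cap}(\alpha/2))$. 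Your approach buys self-containedness (no reliance on Yao's cone-covering theorem, which is an external and nontrivial result) and an explicit closed-form bound whose positivity is immediate for $0 < \alpha < \pi$; the paper's approach buys a combinatorial bound tied to a cone decomposition that is reused conceptually elsewhere in that literature (RNG* is itself defined via cones), which makes the lemma sit naturally beside the constructions it supports. Your closing remarks are also sound: the cap-disjointness step via the triangle inequality for the geodesic metric is exactly right, and your observation that a bare compactness argument gives per-configuration finiteness but not the uniform $(\alpha,d)$-dependent bound is a correct and worthwhile distinction, since the lemma as stated requires the supremum over all configurations to be finite.
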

\begin{proof}
Yao\cite{yao1982constructing} prove that, for any $0 < \alpha < \pi$, one can cover the space $E^d$ with finite convex cones such that $sup\{\Theta(C_i)| C_i \in \mathbb{C}\}= \varphi$. Let $\mathbb{C} = \{C_1, C_2, ..., C_k\}$ be a set of convex cones constructed by the algorithm proposed by Yao and with $\varphi = \frac{1}{2}\alpha - \varepsilon$, where $\varepsilon$ is a very small value and $\varepsilon > 0$. Let $u$ be the center node and $uv, uw$ be any two edges. We define two convex cones $C_i, C_j$ to be \textit{adjacent} if $\exists{ua}, ua \in (C_i \cap C_j)$.

We will first prove that the closest two edges (two edges forming the smallest angle) must be inside the same cone or two adjacent cones. Let $\angle{abc}$ be the angle between $ba$ and $bc$. If $uv$ and $uw$ are in two cones, $C_i$ and $C_j$, that are not adjacent, \ie, $C_i \cap C_j = \emptyset$, then $inf\{\angle{vuw} | uv \in C_i, uw \in C_j\} = \gamma > 0$. If $uv$ and $uw$ are in two adjacent cones, it's obvious that $inf\{\angle{vuw}\} = 0$. Thus, the closest two edges must appear in the same cone or two adjacent cones.

If $uv$ and $uw$ are inside of the same cone, then $\angle{vuw} \le \varphi < \alpha$ because $sup\{\Theta(C_i)| C_i \in \mathbb{C}\}= \varphi$. Suppose $uv$ and $uw$ are inside of any two \textit{adjacent} convex cones. Let $uz$ be an edge, $uz \in (C_i \cap  C_j)$. Because $\angle{vuz} \le \varphi$ and $\angle{wuz} \le \varphi$, $uv$ and $uw$ is inside the  circular cone $\widetilde{C}(\varphi$, \textit{\textbf{uz}}), and $sup\{\angle{vuw}\} = 2\varphi = \alpha - 2\varepsilon < \alpha$. Suppose we select any $k$ edges sharing a common end node and place them in $\mathbb{C}$. Given that $|\mathbb{C}| = k$, if there exists one empty cone, there must be two edges within the same cone. If we place one edge in each cone, the angle between any two edges is smaller than $\alpha$. Therefore, there are at most $m$ edges sharing a common end node such that the angle between any two of them is $\beta$, where $\beta \ge \alpha$ and $m < k$. $m$ is a function of $d$ and $\alpha$, because they are the only variables involved.
\end{proof}
Lemma \ref{lemma_degree} indicates that when dimension and angle are fixed, the degree will be upper-bounded by a constant. Then we can get the search complexity on SSG for in-database queries.
\begin{proof}
According to the Theorem 2 in \cite{fu2019fast}, the length of the search paths for in-database queries is $O(n^{1/d}$ $\log(n^{1/d})/\triangle{r})$, where $n$ is the size of the dataset, and $\triangle{r}$ is the lower-bound of lengths of edges in any non-isosceles triangles. Suppose $D$ is the upper-bound of the degree of SSG, we can get the search complexity for in-database queries, $O(Dn^{1/d}\log(n^{1/d})/\triangle{r})$, according to the Theorem 1 in \cite{fu2019fast}.
\end{proof}

\subsection{Proof For Theorem \ref{theorem3}}

\begin{proof}
Figure \ref{comonotone} is an illustration of this proof. The search algorithm discussed here is Alg. 1 mentioned in the main body of this paper. The search start point is selected randomly. Suppose $p$ is any point on the search path produced by Alg. 1. Under the condition in this theorem, we assume the current point $p$ can its neighbors are all far from the query and its nearest neighbor. Let $q$ denote the query which is not indexed in the database, and $r$ is $q$'s nearest neighbor which is within the database. To avoid ambiguity, if there are more than one nearest neighbors (two neighbors share the same distance to $q$), we only consider return at least one right answer. We can give all the nodes unique IDs (e.g. indexing them from 1 to $n$) to distinguish them, and here we refer to the nearest neighbor $r$ with a smaller ID. 

According to the definition of SSG, we have at least one edge $\overset{\longrightarrow}{ps}$ such that $\overset{\longrightarrow}{ps} \in Cone(\overset{\longrightarrow}{pr}, \alpha) \cap B(p,\delta(p,r)) \cap S$.  Suppose plane $\Phi$ is the normal plane of the hyper-plane determined by point $q,p,r$ and $pr \in \Phi$. We can get plane $\Phi$ divide $Cone(\overset{\longrightarrow}{pr}, \alpha)$ into two parts evenly. Because the data are distributed randomly in the space, the probability of $s$ to be within each half of the cone is 0.5.

1) If $s$ is within the same half as $q$ (like the position of $s'$), we can have $\angle{s'pq} < \alpha \le 60^{\circ}$, and $s'q$ is not the longest edge in $\triangle{pqs'}$. Because $\forall \overset{\longrightarrow}{ps'} \in SSG, \delta(p,s') < \delta(p,q)$, we have $pq$ is the longest edge in $\triangle{pqs'}, \delta(p,q) > \delta(s'q) $. According to the definition of monotonicity \cite{fu2019fast}, the search process is monotonic towards $q$. Because $s'$ is also on the monotonic path to $r$ according to the property of SSG, the search process is monotonic towards both $q$ and $r$. 

2) If $s$ is within the other half, the search process is monotonic towards both $q$ and $r$ if $\angle{spq} < 60^{\circ}$, \ie, $\angle{spr} + \angle{qpr} < 60^{\circ}$. The probability of such an event is $\epsilon, 0 < \epsilon \le 0.5$. So the total probability that the search process is monotonic towards both $q$ and $r$ is $0.5+\epsilon$. It is easy for one to verify that if $\alpha < 30^{\circ}$, $\angle{spq} < 60^{\circ}$ definitely, and $0.5 + \epsilon=1$.
\end{proof} 

\section{Experiment on Approximation}
Though there is not theoretical connection between SSG and NSSG in this paper, the performance of NSSG benefits from that it do approximate SSG ideal structure to some extent. We can show it by a simple experiment as follows.

\begin{table}[t]
	\caption{Results of the experiment on how much NSSG approximates SSG. 1-NN acc. and 50-NN acc. denotes the 1-NN and 50-NN accuracy of the approximate K-NN graph. Edge overlap ratio denotes the fraction of shared edges in all NSSG edges. Degree denotes the degree of NSSGs.}
	\label{approximation_exp}
	\centering
	
	\begin{tabular}{cccc}
		\hline
		1-NN acc. & 50-NN acc. & edge overlap ratio & degree\\
		\hline
		0.999 & 0.981 & 0.997 & 50 \\
	
	    	0.951 & 0.934 & 0.949 & 50 \\
		
		0.901 & 0.880 & 0.893 & 50 \\
		
		0.855 & 0.811 & 0.848 & 50 \\
		\hline
	\end{tabular}
\end{table}

On the same dataset SIFT10K, we build multiple approximate K-NN graphs with different accuracy. Subsequently, we build an SSG with same parameter configurations on each K-NN graph. We count how many edges each NSSG and the exact SSG share. The results can be found in Table \ref{approximation_exp}.

The K-NN (K=50) graphs are built with nn-descent \cite{Dong2011Efficient}. The 1-NN acc. is calculated according to that how many exact nearest neighbor are connected in this approximate K-NN graph compared with an exact K-NN graph. The 50-NN acc. is calculated similarly. From the results we can see that NSSG share a lot of edges with exact SSG, which is mainly determined by the 1-NN accuracy of the approximate K-NN graph. According to our experience, NSSG presents better search performance built on more accurate K-NN graphs, i.e., better approximates the exact SSG.

\end{document}